\newcommand{\beq}{\begin{equation}}
\newcommand{\eeq}{\end{equation}}
\newcommand{\bi}{\begin{itemize}}
\newcommand{\bd}{\begin{description}}
\newcommand{\ei}{\end{itemize}}
\newcommand{\ed}{\end{description}}
\newcommand{\bc}{\begin{center}}
\newcommand{\ec}{\end{center}}
\newtheorem{df}{Definition}[section]
\newtheorem{ass}[df]{Assumption}
\newtheorem{lm}[df]{Lemma}
\newtheorem{prop}[df]{Proposition}
\newtheorem{thm}[df]{Theorem}
\newtheorem{cor}[df]{Corollary}
\newtheorem{rem}[df]{Remark}
\theoremstyle{definition}
\newtheorem{ex}[df]{Example}
\numberwithin{equation}{section}
\newcommand{\F}{{\mathcal{F}}}
\newcommand{\QG}{{\mathbb{Q}}}
\newcommand{\PG}{{\mathbb{P}}}
\newcommand{\RG}{{\mathbb{R}}}
\newcommand{\PC}{{\mathcal{P}}}
\newcommand{\QC}{{\mathcal{Q}}}
\newcommand{\EC}{{\mathcal{E}}}
\newcommand{\FG}{{\mathbb{F}}}
\newcommand{\FC}{{\mathcal{F}}}
\newcommand{\GC}{{\mathcal{G}}}
\newcommand{\GG}{{\mathbb{G}}}
\newcommand{\HC}{{\mathcal{H}}}
\newcommand{\ind}[1]{{\bf{1}}_{\{{#1}\}}}
\newcommand*\samethanks[1][\value{footnote}]{\footnotemark[#1]}
\DeclareMathOperator*{\esssup}{ess\,sup}
\DeclareMathOperator*{\essinf}{ess\,inf}
\begin{document}

\title{Robust Financial Bubbles}
\author{Francesca Biagini\thanks{Workgroup Financial and Insurance Mathematics, Department of Mathematics, Ludwig-Maximilians Universit\"at, Theresienstra\ss e 39, 80333 Munich, Germany. Emails: biagini@math.lmu.de, mancin@math.lmu.de} $\qquad$        Jacopo Mancin\samethanks
 }

\date{\today}

\maketitle

\begin{abstract} 
We study the concept of financial bubble in a market model endowed with a set $\PC$ of probability measures, typically mutually singular to each other. In this setting we introduce the notions of robust bubble and robust fundamental value in a consistent way with the existing literature in the case $\PC=\{\PG\}$. The notion of no dominance is also investigated under the uncertainty framework. Finally, we provide concrete examples illustrating our results. 
\vspace{0.2cm}

\noindent 
\textbf{Keywords}: Financial bubbles, model uncertainty. \vspace{0.1cm}

\noindent
\textbf{AMS classification}: 60G48, 60G07, 91G99.
\end{abstract}

\section{Introduction}
The mathematical modelization of bubbles has attracted an increasing attention in the recent years. Despite the large and different literature on this topic, the description of a financial bubble is usually built on two main elements: the market value of an asset and its fundamental price. As the first is simply observed by the agent, the modeling of the intrinsic value is where the differences between different approaches arise. In the classical setup, where models with one prior are considered, the main approach is given by the martingale theory of bubbles (we cite for example \cite{sorin}, \cite{protter_complete}, \cite{protter_incomplete} and \cite{protter_fundamental}). According to this theory the fundamental value is defined as the expected sum of future discounted payoffs. Recently another definition has been proposed in \cite{schweizer_bubbles}, where the fundamental value is assumed to be the superreplication price of the asset.\\
The common base of all this models is the starting choice of a filtered probability space and the use of one prior. In this paper we aim to contribute to the existing literature by proposing a framework for the formation of bubbles in a continuous-time financial market under uncertainty. By doing this we allow the investor to consider a wider set of models and to make robust decisions with respect to the scenarios contemplated by all of them. The market value will still be exogenous, but the fundamental value will have a fairly different interpretation from the classical literature and generate unexpected consequences.

We suppose the agent to be endowed with a family $\PC$ of local martingale measures, each one specifying possible dynamics of the financial assets. The set of priors satisfies some regularity conditions, stated in Assumption \ref{ass}, according to the results in \cite{nutz_constructing}. Among the variety of frameworks in model uncertainty literature (see for example \cite{cohen}, \cite{Peng:Gexp}, \cite{soner_aggregation} and \cite{Vorbrink}), this choice is motivated by several reasons. This model guarantees the existence of a time consistent sublinear expectation and has been used in many recent works for the study of some of the classical problems in stochastic finance, such as the absence of arbitrage and superreplication prices (see for example \cite{nutz_robust}, \cite{nutz_jumps}, \cite{nutz_superhedging} and \cite{nutz_optimalstopping}). Moreover it includes two of the most interesting volatility uncertainty models, namely the $G$-setting (see \cite{Peng:Gexp}) and the random $G$-setting (see \cite{nutz_random}), whilst allowing for the tractability of stopping times.

The concept of bubble is more delicate to define in the presence of uncertainty. One of the problems consists in providing a well-posed definition of \emph{robust fundamental value} $S^\ast=(S^\ast_t)_{t\geq 0}$ of a given financial asset $S=(S_t)_{t\geq 0}$: the \emph{robust bubble} $\beta=(\beta_t)_{t\geq 0}$ will be again the difference between $S$ and $S^\ast$. Not only the fundamental value needs to be consistent with the existing literature when $\PC=\{\PG\}$, but it must also rule out trivial situations, e.g.\ a bubble in an underlying $\PG$-market determining the presence of robust bubble. Since in this setting we have no linear pricing system and because of the consequent difficulty to transpose in the present model the concept of risk neutral valuation of discounted future payments, we choose to describe robust fundamental values through superreplication prices, see Definition \ref{rfv}, by extending the approach of \cite{schweizer_bubbles}. In Section \ref{RFV} we accurately discuss this issue.\\
One of the main novelties of our approach is then the $\PG$-\emph{local submartingale} behavior of bubbles under each $\PG$-market, i.e.\ when examined under the view of an agent endowed with just one prior $\PG$ from the family $\PC$. This generalizes in a natural way the local-martingale dynamics displayed in the classical models from the literature and it allows to describe the birth of a bubble and its growth in size in a \emph{static model}, i.e.\ without changing the investor's views on the market over time. The same submartingale behavior is in fact described for some cases also in \cite{sorin}, but it is the result of a smooth shift from a pricing measure to another. To the best of our knowledge this description of bubbles is new, as it distinguishes itself also from the robust setting outlined in \cite{cox_robust}, where bubbles arise as a consequence of constraints on possible trading strategies in a different setup.\\
Another interesting feature of our model is the way a robust bubble is perceived in the underlying $\PG$-markets: it might in fact happen that a bubble is not seen as such for some particular priors. Alternatively stated, the asset originating the robust bubble may be a true $\PG$-martingale for some $\PG\in\PC$. To this regard our results represent a relevant extension of the setting of \cite{schweizer_bubbles}, where market bubbliness excludes the existence of a true martingale measure.\\
Finally we investigate the concept of no dominance, proposing its robust counterpart in the model uncertainty framework and studying its consequences on the concept of robust bubble.

The paper is organized as follows. In Section 2 we outline the notation and the financial model. In particular we will recall the results from \cite{nutz_constructing} and construct a new set of priors satisfying Assumption \ref{ass}. In Section 3, after reviewing the existing literature, we discuss and study our concept of robust bubble and \emph{robust no dominance}, illustrating our results through concrete examples. In Section 4 we conclude by examining the situation in which the time horizon is not bounded.

\section{The Setting}\label{setting}
We consider a financial market under a family $\PC$ of probability measures, typically non-dominated, on $\Omega=C_0(\RG_+,\RG^d)$, the space of continuous paths $\omega=(\omega_s)_{s\geq0}$ in $\RG^d$ with $\omega_0=0$ endowed with the topology of locally uniform convergence. We denote with $\FC$ the Borel $\sigma$-field on $\Omega$. We are interested in sublinear expectations 
\[
\xi\mapsto \EC_0(\xi):=\sup_{\PG\in\PC} E_\PG[\xi],
\]
inducing time consistent conditional sublinear expectations. For this reason some conditions have to be enforced both on the set of priors and on the random variables we take into account. Given a stopping time $\tau$ of the filtration $\FG:=\{\F_t\}_{t\geq0}$ generated by the canonical process, the main technical issue is to guarantee that 

\beq
\EC_\tau(\xi)=\esssup_{\PG^\prime \in \PC(\tau,\PG)} E_{\PG^\prime} [\xi|\F_\tau]\qquad \PG-a.s.\text{ for all } \PG\in\PC,
\eeq
where $\PC(\tau,\PG)=\{\PG^{\prime}\in\PG\; : \; \PG^{\prime}=\PG\text{ on } \F_\tau\}$, is well-defined as conditional sublinear expectation operator. This problem is solved in the literature by means of different approaches, generally by shrinking the set of priors $\PC$ or by requiring strong regularity of the random variables. We cite \cite{cohen}, \cite{nutz_superhedging}, \cite{nutz_constructing}, \cite{peng_nonlinear} and \cite{soner_aggregation} to mention some of the papers on this topic.
We choose to place ourselves in the context of \cite{nutz_constructing} as it generalizes the frameworks of $G$-expectation and random $G$-expectation and provides some tractability of stopping times, which remains still an open question in the $G$-setting. \\
For the sake of completeness we then summarize the hypothesis we enforce on the set $\PC$, as stated in \cite{nutz_constructing}, together with the notation introduced thereby. Let $\mathfrak{P}(\Omega)$ be the set of all probability measures on $(\Omega,\FC)$ equipped with the topology of weak convergence. For any stopping time $\tau$, the concatenation of $\omega,\tilde{\omega}\in\Omega$ at $\tau$ is the path
\[
(\omega\otimes_\tau\tilde{\omega})_u:=\omega_u\textbf{1}_{[0,\tau(\omega))}(u)+(\omega_{\tau(\omega)}+\tilde{\omega}_{u-\tau(\omega)})\textbf{1}_{[\tau(\omega),\infty)}(u),\quad u\geq0.
\]
Given a function $\xi$ on $\Omega$ and $\omega\in\Omega$, we define the function $\xi^{\tau,\omega}$ on $\Omega$ by
\[
\xi^{\tau,\omega}(\tilde{\omega}):=\xi(\omega\otimes_\tau\tilde{\omega}),\quad \tilde{\omega}\in\Omega.
\]

\noindent
For any probability measure $\PG\in\mathfrak{P}(\Omega)$ there exists a regular conditional probability distribution $\{\PG_\tau^\omega\}_{\omega\in\Omega}$ given $\FC_\tau$. That is $\PG_\tau^\omega\in\mathfrak{P}(\Omega)$ for each $\omega$, while $\omega\mapsto \PG_\tau^\omega(A)$ is $\FC_\tau$-measurable for any $A\in\FC$ and
\[
E_{\PG_\tau^\omega}[\xi]=E_{\PG}[\xi|\FC_\tau](\omega)\quad\text{for}\quad \PG-a.e.\;\omega\in\Omega,
\]
whenever $\xi$ is $\FC$-measurable and bounded. Moreover, $\PG_\tau^\omega$ can be chosen to be concentrated on the set of paths that coincide with $\omega$ up to time $\tau(\omega)$,
\[
\PG_\tau^\omega\{\omega^\prime\in\Omega:\; \omega^\prime=\omega\text{ on } [0,\tau(\omega)]\}=1\quad \text{for all } \omega\in\Omega.
\]
We define the probability measure $\PG^{\tau,\omega}\in\mathfrak{P}(\Omega)$ by
\[
\PG^{\tau,\omega}(A):=\PG_\tau^\omega(\omega\otimes_\tau A),\quad A\in\FC,\quad \text{where } \omega\otimes_\tau A:=\{\omega\otimes_\tau\tilde{\omega}:\; \tilde{\omega}\in A\}.
\]
We then have the the identities
\[
E_{\PG^{\tau,\omega}}[\xi^{\tau,\omega}]=E_{\PG_\tau^\omega}[\xi]=E_\PG[\xi|\FC_\tau](\omega)\quad \text{for}\quad \PG-a.e.\;\omega\in\Omega.
\]

\noindent
We next recall, as in \cite{nutz_constructing}, some basics from the theory of analytic sets. A subset of a Polish space is called \emph{analytic} if it is the image of a Borel subset of another Polish space under a Borel-measurable  mapping. In particular any Borel set is analytic. The collection of analytic sets is stable under countable intersections and unions, but in general not under complementation. Moreover for every $t\geq 0$ the universal completion of $\FC_t$ is the $\sigma$-field $\FC_t^\ast=\cap_\PG \FC_t^\PG$, where $\PG$ ranges over all probability measures on $\FC_t$ and $\FC_t^\PG$ is the completion of $\FC_t$ under $\PG$. We denote with $\FG^\ast$ the filtration $\{\FC^\ast_t\}_{t\geq 0}$.

\noindent
For each $(s,\omega)\in\RG_+\times\Omega$ we fix a set $\PC(s,\omega)\subseteq \mathfrak{P}(\Omega)$. Assume that 
\[
\PC(s,\omega)=\PC(s,\tilde{\omega})\qquad \text{if}\qquad \omega|_{[0,s]}=\tilde{\omega}|_{[0,s]}.
\]
We then state Assumption 2.1 from \cite{nutz_constructing}.

\begin{ass}
\label{ass}
Let $(s,\bar{\omega})\in\RG_+\times\Omega$, let $\tau$ be a stopping time such that $\tau\geq s$ and $\PG\in\PC(s,\bar{\omega})$. Set $\theta:=\tau^{s,\bar{\omega}}-s$.
\begin{itemize}
\item[(i)] \emph{Measurability}: The graph $\{(\PG^\prime,\omega):\; \omega\in\Omega, \; \PG^\prime\in\PC(\tau,\omega)\}\subseteq \mathfrak{P}(\Omega)\times \Omega$ is analytic.
\item[(ii)] \emph{Invariance}: We have $\PG^{\theta,\omega}\in\PC(\tau, \bar{\omega}\otimes_s \omega)$ for $\PG$-a.e\ $\omega\in\Omega$.
\item[(iii)] \emph{Stability under pasting}: If $\nu:\Omega\to\mathfrak{P}(\Omega)$ is a $\F_\theta$-measurable kernel and $\nu(\omega)\in\PC(\tau, \bar{\omega}\otimes_s \omega)$ for $\PG$-a.e\ $\omega\in\Omega$, then the measure defined by 
\beq
\label{cond21}
\bar{\PG}(A)=\int\int ({\bf{1}}_{A})^{\theta,\omega}(\omega^\prime)\nu(d\omega^\prime;\omega)\PG(d\omega),\qquad A\in\F
\eeq
is an element of $\PC(s,\bar{\omega})$.
\end{itemize}
\end{ass}

\noindent 
Exploiting the previous conditions, Theorem 2.3 in \cite{nutz_constructing} proves the following.
\begin{thm}\label{robustsetting}
Let $\sigma\leq\tau$ be stopping times and $\xi:\Omega\to\bar{\RG}$ be an upper semianalytic function. Then under Assumption \ref{ass} the function
\[
\EC_\tau(\xi)(\omega):=\sup_{\PG\in \PC(\tau,\omega)} E_\PG[\xi^{\tau,\omega}],\quad \omega\in\Omega
\]
is $\FC_\tau^\ast$-measurable and upper semianalytic. Moreover
\beq
\label{nutz1}
\EC_\sigma(\xi)(\omega)=\EC_\sigma(\EC_\tau(\xi))(\omega)\quad \text{for all}\quad \omega\in\Omega.
\eeq
Furthermore,
\beq
\label{nuzt2}
\EC_\tau(\xi)=\esssup_{\PG^\prime \in \PC(\tau,\PG)} E_{\PG^\prime}[\xi|\FC_\tau]\quad \PG-a.s.\quad \text{for all}\quad \PG\in\PC,
\eeq
where $\PC(\tau,\PG)=\{\PG^\prime\in\PC:\; \PG^\prime=\PG\text{ on } \FC_\tau\}$, and in particular
\beq
\label{nutz3}
\EC_\sigma(\xi)=\esssup_{\PG^\prime\in\PC(\sigma,\PG)} E_{\PG^\prime}[\EC_\tau(\xi)|\FC_\sigma]\quad \PG-a.s.\quad \text{for all}\quad \PG\in\PC.
\eeq
\end{thm}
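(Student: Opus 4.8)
The plan is to prove the four assertions in the stated order, since each relies on the previous ones, using throughout the analytic-set and measurable-selection machinery in the style of Bertsekas and Shreve. The two facts I lean on are that $\PG\mapsto E_\PG[\eta]$ is upper semianalytic whenever $\eta$ is, and that integrating an upper semianalytic integrand against a Borel-measurable kernel again yields an upper semianalytic function. For the measurability claim, I would first observe that $(\PG,\omega)\mapsto E_\PG[\xi^{\tau,\omega}]$ is upper semianalytic, since $\omega\mapsto\xi^{\tau,\omega}$ is built from the Borel concatenation map and integration preserves the property. Restricting to the analytic graph $\{(\PG,\omega):\PG\in\PC(\tau,\omega)\}$ granted by Assumption \ref{ass}(i) and projecting out the $\PG$-coordinate then shows that $\EC_\tau(\xi)$ is upper semianalytic, and hence $\FC_\tau^\ast$-measurable because every upper semianalytic function is universally measurable.

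The core of the argument is the pointwise tower property \eqref{nutz1}, which I would establish by two opposite inequalities. For ``$\le$'', I would fix $\PG\in\PC(\sigma,\omega)$, disintegrate it at the conditional level (relative to the shifted time $\te:=\tau^{\sigma,\omega}-\sigma$) into regular conditional kernels, and invoke Assumption \ref{ass}(ii) to ensure these kernels belong to the relevant sets $\PC(\tau,\cdot)$; bounding each conditional integral by $\EC_\tau(\xi)$ and integrating back gives $E_\PG[\xi^{\sigma,\omega}]\le E_\PG[(\EC_\tau(\xi))^{\sigma,\omega}]$, and taking the supremum over $\PG$ yields $\EC_\sigma(\xi)\le\EC_\sigma(\EC_\tau(\xi))$. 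The reverse inequality is where the real work lies. Given $\ep>0$, I would use the Jankov--von Neumann selection theorem on the analytic graph to extract a universally measurable kernel $\nu$ with $\nu(\omega')\in\PC(\tau,\cdot)$ that is $\ep$-optimal for $\EC_\tau(\xi)$, pick a correspondingly $\ep$-optimal measure at level $\sigma$, and paste the two together by Assumption \ref{ass}(iii) into a single $\bar{\PG}\in\PC(\sigma,\omega)$. By construction $E_{\bar{\PG}}[\xi^{\sigma,\omega}]$ approximates $\EC_\sigma(\EC_\tau(\xi))(\omega)$ up to an error controlled by $\ep$, so letting $\ep\to0$ gives the matching bound. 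I expect the main obstacle to be precisely this step: verifying the measurability of the selected kernel and handling the integrability bookkeeping when $\xi$ takes values in $\bar{\RG}$.

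It remains to pass from the pointwise definition to the conditional representation \eqref{nuzt2} and to deduce \eqref{nutz3}. For the inequality ``$\ge$'', the identity $E_{\PG^{\tau,\omega}}[\xi^{\tau,\omega}]=E_\PG[\xi|\FC_\tau](\omega)$ combined with $\PG^{\tau,\omega}\in\PC(\tau,\omega)$ (again Assumption \ref{ass}(ii)) shows that each $E_{\PG^\prime}[\xi|\FC_\tau]$, $\PG^\prime\in\PC(\tau,\PG)$, is dominated $\PG$-a.s.\ by $\EC_\tau(\xi)$; for ``$\le$'', I would show that the family $\{E_{\PG^\prime}[\xi|\FC_\tau]:\PG^\prime\in\PC(\tau,\PG)\}$ is upward directed, a consequence of stability under pasting, so that its essential supremum is realized along an increasing sequence whose limit coincides $\PG$-a.s.\ with $\EC_\tau(\xi)$ by the selection argument above. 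Finally, \eqref{nutz3} follows by applying the representation \eqref{nuzt2} to the upper semianalytic function $\EC_\tau(\xi)$ at the stopping time $\sigma$ and substituting $\EC_\sigma(\EC_\tau(\xi))=\EC_\sigma(\xi)$ from \eqref{nutz1}.
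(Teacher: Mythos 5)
This theorem is not proved in the paper at all: it is imported verbatim as Theorem 2.3 of \cite{nutz_constructing}, so there is no in-paper argument to compare yours against. Your sketch is, in substance, a faithful reconstruction of the proof strategy in that reference: upper semianalyticity of $\EC_\tau(\xi)$ via the analytic graph of Assumption \ref{ass}(i) and the projection/selection machinery of Bertsekas--Shreve; the inequality $\EC_\sigma(\xi)\le\EC_\sigma(\EC_\tau(\xi))$ by disintegration and invariance (ii); the reverse inequality by a Jankov--von Neumann $\ep$-optimal selector pasted into a single measure via (iii); the representation \eqref{nuzt2} by combining the r.c.p.d.\ identity $E_{\PG^{\tau,\omega}}[\xi^{\tau,\omega}]=E_\PG[\xi|\FC_\tau](\omega)$ with invariance for one direction and a pasting of $\PG$ with an $\ep$-optimal kernel for the other; and \eqref{nutz3} by applying \eqref{nuzt2} to the upper semianalytic function $\EC_\tau(\xi)$ and invoking \eqref{nutz1}. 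The two obstacles you flag are indeed where the original proof spends its care: Assumption \ref{ass}(iii) demands an $\FC_\theta$-measurable kernel, while the selection theorem only delivers an analytically (hence universally) measurable one, so the selector must be replaced by a Borel kernel agreeing with it $\PG$-a.s.\ before pasting; and $\ep$-optimality must be formulated with a truncation (e.g.\ requiring the conditional value to exceed $n$ on the set where $\EC_\tau(\xi)=\infty$) to handle $\bar{\RG}$-valued $\xi$. Since you identify both issues and the intended fixes are standard, your proposal is a correct outline of the argument that the paper itself delegates to the cited work.
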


\noindent
We finally call $\PC$-martingale, an adapted stochastic process $M=(M_s)_{s\geq 0}$ such that $\EC_0(M_t)$ is finite for every $t$ and 
\[
M_t = \EC_t(M_T)
\]
for any $T\geq t$. The particular $\PC$-martingales for which also $-M$ is a $\PC$-martingale are called $\PC
$-symmetric martingales.

It is an important result of \cite{nutz_constructing} that the $G$-expectation framework can be incorporated in the model described above. More precisely, consider the set of martingale measures 
\[
\mathfrak{M}=\{\PG\in\mathfrak{P}(\Omega):\; B \text{ is a local $\PG$-martingale}\},
\]
where $B=\{B_u(\omega)\}$ denotes the canonical process, and its subset
\[
\mathfrak{M}_a=\{\PG\in\mathfrak{M}:\; \langle B\rangle^\PG \text{ is absolutely continuous $\PG$-a.s.}\},
\] 
where now $\langle B\rangle^\PG$ is the $\RG^{d\times d}$-valued quadratic variation process of $B$ under $\PG$  and absolute continuity refers to the Lebesgue measure. We report here Proposition 3.1 from \cite{nutz_constructing}.
\begin{prop}\label{G}
The set 
\[
\PC_{\textbf{D}}=\{\PG\in\mathfrak{M}_a:\; d\langle B\rangle_t^\PG/dt\in \textbf{D}\;\; \PG\times dt-a.e.\},
\]
where $\textbf{D}$ is a nonempty, convex and compact subset of $\RG^{d\times d}$, satisfies Assumption \ref{ass}.
\end{prop}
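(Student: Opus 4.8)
The plan is to verify the three items of Assumption \ref{ass} directly for the family $\PC_{\textbf{D}}$. The first observation is that $\textbf{D}$ is a fixed set, independent of $(s,\omega)$, so the parametrized family is constant, $\PC(s,\omega)=\PC_{\textbf{D}}$ for every $(s,\omega)\in\RG_+\times\Omega$; in particular every target set appearing in Assumption \ref{ass}, namely $\PC(\tau,\omega)$ and $\PC(\tau,\bar\omega\otimes_s\omega)$, equals $\PC_{\textbf{D}}$. A preliminary step, used repeatedly, is to fix one pathwise (Karandikar/F\"ollmer-type) version $\langle B\rangle$ of the quadratic variation that is $\FG^\ast$-adapted and simultaneously serves as a version of $\langle B\rangle^\PG$ for every $\PG\in\mathfrak{M}$. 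This removes the a priori dependence of $\langle B\rangle^\PG$ on $\PG$ and turns the constraint $d\langle B\rangle_t/dt\in\textbf{D}$ into a genuine measurable property of the pair $(\PG,\omega)$.

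For the measurability condition (i), the graph reduces to the product $\PC_{\textbf{D}}\times\Omega$, so it is enough to prove that $\PC_{\textbf{D}}$ is analytic; in fact I would show it is Borel in $\mathfrak{P}(\Omega)$. I would write $\PC_{\textbf{D}}=\mathfrak{M}\cap\Lambda$, where $\mathfrak{M}$ is known to be Borel and $\Lambda$ is the set of measures for which every normalized increment $(\langle B\rangle_q-\langle B\rangle_p)/(q-p)$, over rationals $0\le p<q$, lies in $\textbf{D}$. Each such constraint is closed (hence Borel) because $\textbf{D}$ is closed, so $\Lambda$ is a countable intersection of Borel sets and thus Borel. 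The point is that, using convexity and compactness of $\textbf{D}$, membership in $\Lambda$ is equivalent to $\langle B\rangle$ being absolutely continuous with $d\langle B\rangle_t/dt\in\textbf{D}$ a.e.: compactness makes the increments bounded, forcing Lipschitz continuity and hence absolute continuity, while convexity makes each normalized increment an average of in-$\textbf{D}$ values (so the density condition implies $\Lambda$), and Lebesgue differentiation gives the converse.

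For invariance (ii) and stability under pasting (iii) the guiding principle is that both the local-martingale property of $B$ and the density constraint are local in time and are preserved under conditioning and pasting. For (ii), fixing $\PG\in\PC_{\textbf{D}}$, the regular conditional probabilities $\PG^{\theta,\omega}$ keep $B$ a local martingale for $\PG$-a.e.\ $\omega$, and since $d\langle B\rangle_t/dt\in\textbf{D}$ holds $\PG\times dt$-a.e., a Fubini argument on the shifted quadratic variation yields the same for $\PG^{\theta,\omega}$, so $\PG^{\theta,\omega}\in\PC_{\textbf{D}}$. For (iii), given an $\F_\theta$-measurable kernel $\nu$ with $\nu(\omega)\in\PC_{\textbf{D}}$ for $\PG$-a.e.\ $\omega$, the pasted measure $\bar\PG$ of \eqref{cond21} agrees with $\PG$ up to $\theta$ and with $\nu(\omega)$ afterwards; hence $\langle B\rangle^{\bar\PG}$ has density in $\textbf{D}$ on both pieces, and it remains to see that $B$ stays a local $\bar\PG$-martingale across $\theta$, which I would obtain by a localization and optional-sampling argument gluing the martingale property of the two pieces at the junction time.

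The main obstacle is the local-martingale part of (iii): showing that $\bar\PG$ keeps $B$ a local martingale across the junction time $\theta$ is the genuinely nontrivial step, and it rests on the measurability of the pathwise quadratic variation fixed at the outset. By contrast, the density constraint under pasting is automatic, since $\langle B\rangle^{\bar\PG}$ inherits an in-$\textbf{D}$ density from $\PG$ before $\theta$ and from $\nu(\omega)$ after. The hypotheses on $\textbf{D}$ enter mainly in (i): compactness and convexity are exactly what reduce the a.e.\ density condition to countably many closed constraints on rational increments, and without them both the measurability in (i) and the clean localization in (ii)--(iii) could break down.
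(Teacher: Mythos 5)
Your outline is sound, but note first what you could not have known: the paper does not prove Proposition \ref{G} at all --- it quotes it verbatim as Proposition 3.1 of \cite{nutz_constructing}. The closest in-paper argument is the proof of Proposition \ref{const}, which follows the three steps of Theorem 4.3 in \cite{nutz_constructing}: measurability via the representation $\langle B\rangle_t=\int_0^t\varphi_s\,ds$ with a Borel density $\varphi$ (their Lemma 4.4), invariance via their Lemma 4.7, and pasting via their Lemma 4.9 together with the lim sup density process $\hat a$ of \eqref{nutza}. Against that benchmark, your treatment of (ii) and (iii) is essentially the same route (r.c.p.d.\ preserves the local martingale property plus Fubini; a pasting lemma gluing the local martingale property at the junction time), stated as a sketch that leans on exactly the lemmas the reference supplies. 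Where you genuinely differ is (i): instead of working with the density process $\hat a$, you characterize $\PC_{\textbf{D}}$ inside $\mathfrak{M}$ by countably many constraints on normalized rational increments, using compactness of $\textbf{D}$ to force Lipschitz continuity (hence absolute continuity) and convexity plus closedness for the equivalence with the a.e.\ density condition via Lebesgue differentiation. That argument is correct and pleasantly elementary; it buys a direct proof that $\PC_{\textbf{D}}$ is Borel without constructing $\hat a$. Two caveats: the individual constraint sets $\{\PG:\PG(A_{p,q})=1\}$ are Borel but not closed in $\mathfrak{P}(\Omega)$ (the pathwise quadratic variation is not continuous in $\omega$), so your parenthetical ``closed'' should just read ``Borel''; and your closing claim that convexity and compactness of $\textbf{D}$ are what make (i)--(iii) work overstates matters --- in \cite{nutz_constructing} measurability, invariance and pasting are established for much more general $\textbf{D}$ via $\hat a$, so these hypotheses are needed only by your particular reduction to rational increments, not by the result itself.
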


\noindent
It is indeed well known that the sublinear expectation 
\[
\EC_0^{\textbf{D}}(\xi):=\sup_{\PG\in\PC_{\textbf{D}}}E_\PG[\xi]
\]
yields the $G$-expectation on the space of quasi continuous functions in $\mathbb{L}^1_G$. We prove in the next proposition that the set $\PC_{\textbf{D},const}\subset\PC_\textbf{D}$ of \emph{constant volatility scenarios} satisfies Assumption \ref{ass}. This result plays a key role for the examples of Section 3.

\begin{prop} \label{const}
The set 
\[
\PC_{\textbf{D}, const}=\{\PG\in\PC_\textbf{D}:\; d\langle B\rangle_t^\PG/dt \text{ is constant for all } t\;	\; \PG-a.s.\},
\]
where $\textbf{D}$ is a nonempty, convex and compact subset of $\RG^{d\times d}$, satisfies Assumption \ref{ass}.
\end{prop}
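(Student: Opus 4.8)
The plan is to build on Proposition \ref{G}: since $\PC_{\textbf{D}}$ already satisfies Assumption \ref{ass}, it is enough to verify that the three conditions survive the restriction to constant-volatility scenarios. The conceptually delicate first step is to fix the right family $\{\PC_{\textbf{D},const}(s,\omega)\}$. A choice independent of the history cannot work, because pasting a measure of volatility $a_1$ with one of volatility $a_2\neq a_1$ produces a measure whose volatility is piecewise constant, hence not constant in time. I would therefore introduce the $\FC_s^\ast$-measurable pathwise realized (average) volatility $a_s(\omega):=\langle B\rangle_s(\omega)/s$ of $\omega$ on $[0,s]$, obtained as a limit of sums over a refining sequence of partitions, and set
\[
\PC_{\textbf{D},const}(s,\omega):=\bigl\{\PG\in\PC_{\textbf{D}}(s,\omega):\ d\langle B\rangle_t^\PG/dt \text{ is constant in } t \text{ and equal to } a_s(\omega),\ \PG\text{-a.s.}\bigr\},
\]
with no matching constraint at $s=0$, so that $\PC_{\textbf{D},const}(0,\cdot)$ is exactly the global set of the statement. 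Convexity and compactness of $\textbf{D}$ guarantee $a_s(\omega)\in\textbf{D}$, so this family is consistent.

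For the measurability condition (i), I would write $\PC_{\textbf{D},const}(\tau,\omega)$ as the intersection of $\PC_{\textbf{D}}(\tau,\omega)$ with the set of measures assigning full mass to the Borel set of constant-volatility paths and with the condition that the resulting constant equals $a_\tau(\omega)$. Since $\PG\mapsto\PG(E)$ is Borel for every fixed Borel set $E$, and $a_\tau$ is measurable, both extra requirements carve out Borel subsets of $\mathfrak{P}(\Omega)\times\Omega$; intersecting the analytic graph furnished by Proposition \ref{G} with Borel sets preserves analyticity, giving (i). For the invariance condition (ii), I would use that $\PC_{\textbf{D}}$ is invariant together with the elementary fact that the shift $\PG\mapsto\PG^{\theta,\omega}$ turns a measure of constant volatility $a$ into one of the same constant volatility $a$; the matching constraint is then inherited once one checks, exactly as in the pasting step below, that $a_\tau(\bar\omega\otimes_s\omega)=a$ for $\PG$-a.e.\ $\omega$.

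The main obstacle is the stability under pasting (iii), which is precisely where non-constant volatility threatens to appear; the resolution is that the matching constraint forces the two volatilities to coincide. Let $\PG\in\PC_{\textbf{D},const}(s,\bar\omega)$ have constant volatility $a=a_s(\bar\omega)$, and let $\nu$ be an $\FC_\theta$-measurable kernel with $\nu(\omega)\in\PC_{\textbf{D},const}(\tau,\bar\omega\otimes_s\omega)$ for $\PG$-a.e.\ $\omega$, so that $\nu(\omega)$ is of constant volatility $a_\tau(\bar\omega\otimes_s\omega)$. By additivity of the quadratic variation, $\langle B\rangle_s(\bar\omega)=a\,s$ by the definition of $a$, while under $\PG$ the sampled path contributes $\langle B\rangle_{\theta(\omega)}=a\,\theta(\omega)$ for $\PG$-a.e.\ $\omega$; hence, with $\tau=s+\theta(\omega)$ along this path,
\[
a_\tau(\bar\omega\otimes_s\omega)=\frac{a\,s+a\,\theta(\omega)}{s+\theta(\omega)}=a\qquad\PG\text{-a.s.}
\]
Thus each $\nu(\omega)$ has the same constant volatility $a$ as $\PG$, so the pasted measure $\bar\PG$ of \eqref{cond21}, which follows $\PG$ up to $\theta$ and $\nu(\omega)$ afterwards, has constant volatility $a$ throughout. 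Since $\bar\PG\in\PC_{\textbf{D}}(s,\bar\omega)$ by Proposition \ref{G} and $a=a_s(\bar\omega)$, we conclude $\bar\PG\in\PC_{\textbf{D},const}(s,\bar\omega)$, which establishes (iii).
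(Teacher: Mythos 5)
Your proof is correct, but it takes a genuinely different---and in fact more careful---route than the paper's. The paper follows the three steps of Theorem 4.3 in \cite{nutz_constructing}: Step 1 (Borel-measurability of the constant-volatility set) and Step 2 (invariance via regular conditional probability distributions) are essentially identical to yours, but the conditioned sets $\PC_{\textbf{D},const}(s,\omega)$ are never defined, and Step 3 claims that, thanks to Lemma 4.9 of \cite{nutz_constructing}, stability under pasting reduces to showing
\[
\left(dr\times \nu(\omega)\right)\left\{ (r,\omega^\prime)\in \llbracket \theta(\omega),\infty\llbracket:\; \hat{a}_r(\omega^\prime) \notin \textbf{D},\;\hat{a}_\cdot(\omega^\prime) \text{ is not constant }  \right\}=0,
\]
which is then said to be ``guaranteed'' by the corresponding statement with the constraint $\hat{a}_r(\omega^\prime)\notin\textbf{D}$ alone. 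This reduction is exactly the point your proposal refuses to gloss over: membership of the volatility in $\textbf{D}$ says nothing about constancy, and, more fundamentally, no condition on the $\nu$-part alone can rule out a mismatch between the constant before and after the pasting time. Indeed, under the history-independent reading $\PC_{\textbf{D},const}(s,\omega)\equiv\PC_{\textbf{D},const}$ (the analogue of how Proposition \ref{G} treats $\PC_{\textbf{D}}$), condition (iii) actually fails: with $\textbf{D}=[1,2]$, pasting the volatility-$1$ prior with the constant kernel $\nu(\omega)\equiv\QG_2$ (volatility $2$) at the deterministic time $\theta=1$ produces a measure with piecewise-constant, non-constant volatility---precisely your opening observation. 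Your resolution, namely making the family history-dependent through the matching constraint $d\langle B\rangle^{\PG}_t/dt\equiv a_s(\omega)$ and then checking that this constraint propagates under conditioning and forces the pasted constants to coincide via $a_\tau(\bar\omega\otimes_s\omega)=(as+a\theta(\omega))/(s+\theta(\omega))=a$ $\PG$-a.s., supplies exactly the matching argument that the paper's Step 3 omits, and is what makes the statement true. What the paper's approach buys is brevity and a formal parallelism with \cite{nutz_constructing}; what yours buys is an actual proof of (iii), at the modest extra cost of defining the family $\{\PC_{\textbf{D},const}(s,\omega)\}$ explicitly and verifying the joint measurability of the set $\{(\PG^\prime,\omega):\; c(\PG^\prime)=a_\tau(\omega)\}$, which you correctly reduce to Borel operations compatible with analyticity of the graph.
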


\begin{proof}
We divide the proof in three steps, following those in Theorem 4.3.\ in \cite{nutz_constructing} and using the same notation. \\
\emph{Step 1}: Lemma 4.4.\ in \cite{nutz_constructing} shows that $\mathfrak{M}_a$ is Borel-measurable by proving that 
\[
\mathfrak{M}_a=\left\{\PG\in\mathfrak{M}:\;\langle B\rangle_t=\int_0^t\varphi_s ds\;\; \PG-a.s.\text{ for all } t\in\mathbb{Q}_+\right\},
\]
where $\varphi$ is Borel-measurable and corresponds $\PG$-a.s.\ to the density of the absolutely continuous part of $\langle B\rangle$ with respect to the Lebesgue measure. Analogously it can be shown that 
\[
\PC_{\textbf{D}, const}=\left\{\PG\in\mathfrak{M}:\;\langle B\rangle_t=\int_0^t\varphi_s ds=ct\;\; \PG-a.s.\text{ for all } t\in\mathbb{Q}_+,\; c\in\textbf{D}\right\}.
\]
Hence $\PC_{\textbf{D}, const}$ is Borel-measurable, which is enough to guarantee the measurability required in Assumption \ref{ass}. \\
\emph{Step 2}: Let now $\tau$ be a stopping time and $\PG\in\mathfrak{M}_a$. For $\PG$-a.e.\ $\omega\in\Omega$, we have $\PG^{\tau,\omega}\in\mathfrak{M}_a$, as shown in Lemma 4.7.\ in \cite{nutz_constructing}. But then, if $\QG\in\PC_{\textbf{D},const}$, we have $\QG^{\tau,\omega}\in \PC_{\textbf{D}, const}$, as $d\langle B\rangle_t^\QG/dt$ is constant for all $t$ $\QG^{\tau,\omega}$-a.s. being a regular conditional probability distribution of $\QG$, which ensures condition (ii) of Assumption \ref{ass}. \\
\emph{Step 3}: To prove the validity of condition (iii) we need to introduce some more notation. Let $s\in\RG_+$, $\tau\geq s$ be a stopping time and $\bar{\omega}\in\Omega$ and $\PG\in\PC_{\textbf{D},const}(s,\bar{\omega})$. Moreover, let $\theta:=\tau^{s,\bar{\omega}}-s$, let $\nu:\;\Omega\mapsto\mathfrak{P}(\Omega)$ be an $\FC_\theta$-measurable kernel such that $\nu(\omega)\in\PC_{\textbf{D},const}(\tau,\bar{\omega}\otimes_s\omega)$ for $\PG$-a.s. $\omega\in\Omega$ and let $\bar{\PG}$ be defined as in \eqref{cond21}. Finally, denote with $\hat{a}$ the $\bar{\RG}^{d\times d}$-valued process 
\beq
\label{nutza}
\hat{a}_t(\omega):=\limsup_{n\to\infty} n \left[ \langle B\rangle_t(\omega)-\langle B\rangle_{t-\/n}(\omega)\right],\quad t>0.
\eeq

\noindent
We need to show that $\bar{\PG}\in\PC_{\textbf{D},const}(s,\bar{\omega})$. To this end, we notice that because of Lemma 4.9.\ in \cite{nutz_constructing}, the only thing we have to prove is 
\[
\left(dr\times \nu(\omega)\right)\left\{ (r,\omega^\prime)\in \llbracket \theta(\omega),\infty\llbracket:\; \hat{a}_r(\omega^\prime) \notin \textbf{D},\;\hat{a}_\cdot(\omega^\prime) \text{ is not constant }  \right\}=0,
\]
for $\PG$-a.e.\ $\omega\in\Omega$, but this is guaranteed by the fact that 
\[
\left(dr\times \nu(\omega)\right)\left\{ (r,\omega^\prime)\in \llbracket \theta(\omega),\infty\llbracket:\; \hat{a}_r(\omega^\prime) \notin \textbf{D}\right\}=0,
\]
as shown in \cite{nutz_constructing}.
\end{proof}

\subsection{The Market Model}\label{marketmodel}
We assume that our market model is given by $(\Omega,\FC)$ endowed with a set of LMM $\QC$ satisfying Assumption \ref{ass}, as introduced above. We consider a discounted risky asset given by a $\RG^d$-valued, $\FG^\ast$-adapted and right-continuous process $S=(S_t)_{t\geq 0}$ such that its paths are $\QC$-q.s.\ continuous. Let $\tau>0$ q.s.\ be a stopping time describing the maturity of the risky asset and $X_\tau$ be the final payoff or liquidation value at time $\tau$. The bank account $S^0$ is assumed to be constant and equal to $1$. The wealth process $W=(W_t)_{t\geq 0}$ generated from owning the asset is given by 
\[
W_t:= S_t\ind{\tau>t} +X_\tau\ind{\tau\leq t}.
\]

\noindent
In the standard literature on bubbles it is usually assumed that the No Free Launch With Vanishing Risk condition (NFLVR) holds. When working in the context of multiple priors models the situation becomes more involved. In fact there does not exist a robust counterpart to NFLVR yet. There is actually just one well studied concept of arbitrage (arbitrage of the first kind (NA\textsubscript{1})) in the continuous time setting under uncertainty introduced in \cite{nutz_robust}. However in \cite{nutz_robust}, the existence of absolutely continuous martingale measures requires to introduce a stopping time $\zeta$ that causes a jump to a cemetery state, which is invisible under all $\PG\in\PC$ but may be finite under some $\QG\in\QC$, where $\QC$ is an appropriate set of local martingale measures. Therefore, despite the possibility to start with a family $\PC$ of physical measures, the results of \cite{nutz_robust} require to reserve some particular care to the tractability of $\zeta$. This is one of the reasons why, while working in the setting outlined, we will assume for simplicity the following.

\begin{ass}\label{NFLVR}
We assume that the wealth process is a $\QG$-local martingale for every $\QG\in\QC$. Thus the set $\QC$ is made of LMM, enforcing NFLVR under all $\QG$-market. 
\end{ass}

\noindent
By doing this we guarantee at the same time that $W$ is economically justified under all probability scenarios.\\
Moreover, as in the classical setting NFLVR implies NA\textsubscript{1}, and the fact robust NA\textsubscript{1} implies NA\textsubscript{1} under all priors included in the uncertainty framework (see \cite{nutz_robust}), it is reasonable to expect that a robust version of NFLVR will also imply the correspondent robust NA\textsubscript{1}. This question, which is interesting and complex on its own, is beyond the aim of this paper.

We next describe which are the trading strategies allowed in the market. Denote, as in \cite{nutz_jumps}, $L(S,\QC)$ the set of all $\mathbb{R}^d$-valued, $\FG$-predictable processes that are $S$-integrable for all $\QG\in\QC$. Let further $\mathcal{N}^\QC$ be the collection of sets that are $(\FC,\QG)$-null for all $\QG\in\QC$. Denote $\GG=(\GC_t)_{t\geq0}$, where
\[
\GC_t:=\F_t^\ast\vee \mathcal{N}^\QC.
\]

\begin{df}
We then say that a $\GG$-predictable process $H\in L(S,\QC)$ is \emph{admissible} if $H\cdot S$ is a $\QG$-supermartingale for every $\QG\in\QC$, and we denote $\mathcal{H}$ the sets of all such processes.
\end{df}

\section{Robust Bubbles}
An important part of the literature defines a bubble as the situation in which the \emph{market value} $S$ of an asset is greater than its \emph{fundamental value} $S^\ast$. In other words a bubble appears at time $t$ if 
\[
\beta_t:=S_t-S^\ast_t>0.
\]

\noindent
In order to have a better understanding of what should be the right notion of asset fundamental value under model uncertainty, we first present a short survey on how this concept is modeled in the classical literature of financial bubbles. \\
For simplicity we will start by considering a finite time horizon. Let then $T\in\mathbb{R}_+$ be such that $\tau\leq T$. We note that in this case $W_t=S_t$ for every $t\in[0,T]$, if $X_\tau=S_\tau$, which will be assumed throughout this section.

\subsection{Classical Fundamental Value Modeling}\label{classicsec}
When a unique prior $\PG$ exists, we could recognize two main approaches for defining the fundamental value of a financial asset. In one setting, see \cite{sorin}, \cite{protter_complete} and \cite{protter_incomplete} for a reference, this is defined as the asset's discounted future payoffs under a risk neutral measure. This means that if $\mathbb{Q}\in \mathcal{M}_{loc}(S)$, where $\mathcal{M}_{loc}(S)$ denotes the set of all equivalent martingale measures for $S$, the fundamental value $S^\ast=(S^\ast_t)_{t\in[0,T]}$ is given by
\[
S^\ast_t= E_\QG [S_T | \F_t]
\]
for every $t\in[0,T]$, where $T$ is a fixed finite time horizon. The concept of bubble depends on the following distinction
\[
\mathcal{M}_{loc}(S)=\mathcal{M}_{UI}(S)\cup\mathcal{M}_{NUI}(S),
\]
where $\mathcal{M}_{UI}(S)$ is the class of measures $\QG\approx\PG$ such that $S$ is a uniformly integrable martingale under $\QG$ and $\mathcal{M}_{NUI}(S)=\mathcal{M}_{loc}(S)\setminus \mathcal{M}_{UI}(S)$.\\
The market bubbliness thus is built upon the investor's views: if she acts accordingly to a $\mathbb{Q}\in\mathcal{M}_{UI}(W)$ then she would see no bubble; on the contrary, if $\mathbb{R}\in\mathcal{M}_{NUI}(W)$ is perceived to be the right market view then there would be an asset bubble. In this sense the concept of bubble is dynamic: bubbles are born or burst depending on how the investor changes her perspectives on the market.\\
If the market is complete the situation simplifies. As $\mathcal{M}_{loc}(S)$ is made of a unique element (and it must exist if the usual NFLVR condition holds), either there is a bubble from the beginning or there is no bubble at all. This result agrees with the second main approach to financial bubbles (see \cite{schweizer_bubbles} and the references therein), in which the fundamental value coincide with the superreplication price. Denoting by  given by $\mathbb{L}^0_+(\FC_t)$ the set of $\FC_t$-measurable random variables taking $\PG$-a.s.\ values in $[0,\infty)$, the superreplication price is given by
\[
\pi_t(S):=\essinf\{v\in\mathbb{L}^0_+(\FC_t)\;:\;\exists \;\theta\in\Theta_t \text{ with } v+(\theta\cdot S)_T\geq S_T\;\PG-a.s. \}
\] 
for $t\in[0,T]$ and a suitably defined class of investment strategies $\Theta_t$ that are different from $0$ only on the interval $[t,T]$. Given the duality (see \cite{kramov})
\beq
\label{superreplication}
\pi_t(S)=\sup_{\QG\in\mathcal{M}_{loc}(S)}E_\QG[S_T|\F_t],
\eeq
if there exists a bubble in a complete market then the superreplication price must be lower than the actual price of the asset observed in the market.\\
Differences between the two approaches emerge in the context of incomplete markets. Given the superreplication duality \eqref{superreplication}, as soon as $\mathcal{M}_{UI}(S)\neq \emptyset$ then there is no bubble. The concepts of bubble itself and bubble birth change. It might be that the superreplication price and the market price are equal at $t=0$ but they may differ at a later time $t>0$ (see Example 3.7 in \cite{schweizer_bubbles}): at the time $t$ is the bubble is born. Here the `bubble missprice' on an asset is given by the fact that we can get the same wealth at the end but exploiting an investment strategy with lower initial price. Still we cannot profit from it given the set of admissible strategies: we need to go short on the asset and long on the superreplicating strategy to generate a sure profit at terminal time but by doing this we also face the risk of unbounded losses in $(0,T)$. In this setting, if a bubble exists then it is perceived by any investor, independently from the particular $\QG\in\mathcal{M}_{loc}(S)$ chosen to price contingent claims.

Finally we present an interesting result that links the two settings described above. We prove that if there is no $\QG\in\mathcal{M}_{loc}(S)$ that excludes the presence of a bubble in the sense of \cite{sorin} or \cite{protter_incomplete}, then there is also a bubble in the case fundamental values are given by superreplication prices. 

\begin{prop}\label{protter vs schweizer}
Let $S=(S_t)_{t\in[0,T]}$ be a continuous adapted process in a filtered probability space satisfying the usual conditions. If $\mathcal{M}_{loc}(S)=\mathcal{M}_{NUI}(S)$ then there is a $t\in[0,T)$ such that 
\[
S_t>\sup_{\QG\in\mathcal{M}_{loc}(S)}E_\QG[S_T|\F_t].
\]
\end{prop}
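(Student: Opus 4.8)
The plan is to argue by contrapositive, or rather to exploit the defining feature of $\mathcal{M}_{NUI}(S)$ directly. The hypothesis $\mathcal{M}_{loc}(S)=\mathcal{M}_{NUI}(S)$ means that \emph{every} equivalent local martingale measure fails to make $S$ a uniformly integrable martingale. Fix any $\QG\in\mathcal{M}_{loc}(S)$. Under $\QG$ the process $S$ is a nonnegative local martingale, hence a $\QG$-supermartingale, so $E_\QG[S_T|\F_t]\leq S_t$ for all $t$. The strict inequality we want must therefore come from the failure of uniform integrability: a nonnegative local martingale that is \emph{not} a uniformly integrable martingale cannot satisfy $E_\QG[S_T|\F_t]=S_t$ identically, because a nonnegative supermartingale for which $E_\QG[S_t]=E_\QG[S_0]$ holds at the terminal time is in fact a true martingale on $[0,T]$, and on a bounded horizon a true martingale that is closed by an integrable terminal value is automatically uniformly integrable.

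Concretely, I would first show that for each fixed $\QG$ there exists a time $t_\QG\in[0,T)$ with $E_\QG[S_T|\F_{t_\QG}]<S_{t_\QG}$ on a set of positive measure; equivalently $E_\QG[S_T]<S_0$, or more carefully $E_\QG[S_T]<E_\QG[S_{t_\QG}]$ for some $t_\QG$. The supermartingale property gives $t\mapsto E_\QG[S_t]$ nonincreasing; if it were constant the process would be a true $\QG$-martingale, and being nonnegative and closed at $T$ it would be uniformly integrable, contradicting $\QG\in\mathcal{M}_{NUI}(S)$. Hence $E_\QG[S_T]<E_\QG[S_0]=S_0$, and by taking $t=0$ we already obtain $S_0>E_\QG[S_T]$ for this particular $\QG$.

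The genuine obstacle is that the inequality we have produced is for a \emph{single} $\QG$, whereas the conclusion requires $S_t$ to dominate the \emph{supremum} over all $\QG\in\mathcal{M}_{loc}(S)$ at a common time $t$. Passing from a pointwise-in-$\QG$ strict gap to a uniform gap against the essential supremum is where care is needed: one must rule out the possibility that the deficiency at $t=0$ is washed out when taking $\sup_\QG E_\QG[S_T]$, i.e.\ that $\sup_\QG E_\QG[S_T]=S_0$ even though each individual expectation is strictly below $S_0$. My plan here is to invoke the superreplication duality \eqref{superreplication}, which identifies $\sup_{\QG}E_\QG[S_T|\F_t]$ with the superreplication price $\pi_t(S)$. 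At $t=0$ one has $\pi_0(S)=\sup_\QG E_\QG[S_T]\leq S_0$ trivially, and the claim becomes that this inequality is strict for some $t$. I would argue that if $\pi_t(S)=S_t$ held for all $t\in[0,T]$, then $S$ would itself be the cheapest superreplicating portfolio at every time, forcing $S$ to be a $\QG$-martingale (indeed a uniformly integrable one on $[0,T]$) for the optimal measure attaining the supremum, or more robustly forcing the existence of some $\QG\in\mathcal{M}_{UI}(S)$ via a closedness/attainment argument for the dual problem.

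The cleanest route, and the one I would pursue to finish, is to establish the equivalence ``$\mathcal{M}_{UI}(S)\neq\emptyset \iff \pi_t(S)=S_t$ for all $t\in[0,T]$.'' The forward direction is immediate: if $\QG^\ast\in\mathcal{M}_{UI}(S)$ then $E_{\QG^\ast}[S_T|\F_t]=S_t$, and combined with $\pi_t(S)\leq S_t$ (from the supermartingale property of every candidate) and $\pi_t(S)\geq E_{\QG^\ast}[S_T|\F_t]=S_t$ one gets equality. The reverse direction is the substantive step and the expected main difficulty: from $\pi_t(S)=S_t$ for all $t$ I would use the superreplication duality together with a selection/attainment argument to produce a measure under which $S$ is a true (hence uniformly integrable, on the bounded horizon) martingale, so that $\mathcal{M}_{UI}(S)\neq\emptyset$. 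Contraposing this equivalence against the hypothesis $\mathcal{M}_{loc}(S)=\mathcal{M}_{NUI}(S)$, i.e.\ $\mathcal{M}_{UI}(S)=\emptyset$, yields a $t\in[0,T)$ with $\pi_t(S)<S_t$, which is exactly $S_t>\sup_{\QG\in\mathcal{M}_{loc}(S)}E_\QG[S_T|\F_t]$.
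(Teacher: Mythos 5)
Your overall strategy coincides with the paper's: argue by contradiction, reducing the claim to the implication ``$\pi_t(S)=S_t$ for all $t\in[0,T]$ $\Rightarrow$ $\mathcal{M}_{UI}(S)\neq\emptyset$''. Your preliminary observations are correct (for each fixed $\QG\in\mathcal{M}_{NUI}(S)$ one indeed gets $E_\QG[S_T]<S_0$, and the forward direction of your equivalence is immediate), and you correctly identify where the difficulty sits. But that is precisely where your proof stops: the reverse implication is never established, only announced as ``a selection/attainment argument for the dual problem''. This is a genuine gap, not a routine detail. The supremum in the superreplication duality \eqref{superreplication} is in general \emph{not} attained: $\mathcal{M}_{loc}(S)$ carries no useful compactness, and attainment of $\sup_{\QG}E_\QG[S_T]$ is in fact \emph{equivalent} to the claim $S_T$ being replicable, so any ``generic'' closedness or selection argument is circular --- it presupposes exactly the structure one must extract from the hypothesis.

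What closes the gap, and what the paper actually invokes, is Theorem 3.1 of Kramkov \cite{kramov}. Under the contradiction hypothesis the superreplication value process $\pi(S)$ coincides with $S$, hence it is a $\QG$-local martingale for every $\QG\in\mathcal{M}_{loc}(S)$. Kramkov's theorem says that this local martingale property of the value process is equivalent to the minimal superreplicating portfolio being \emph{self-financing}, and equivalent to attainment of the dual supremum: there exists $\bar{\QG}\in\mathcal{M}_{loc}(S)$ with $E_{\bar{\QG}}[S_T]=\pi_0(S)=S_0$. Since $S$ is a nonnegative $\bar{\QG}$-supermartingale, constancy of its expectation forces it to be a true $\bar{\QG}$-martingale on $[0,T]$, which on a finite horizon is uniformly integrable (it is closed by the integrable terminal value $S_T$); hence $\bar{\QG}\in\mathcal{M}_{UI}(S)$, contradicting $\mathcal{M}_{loc}(S)=\mathcal{M}_{NUI}(S)$. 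So your plan is the right one, but the step you defer as the ``expected main difficulty'' is the entire mathematical content of the proposition, and it requires the optional-decomposition machinery rather than an abstract attainment argument.
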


\begin{proof}
We argue by contradiction. If we suppose that
\[
S_t=\sup_{\QG\in\mathcal{M}_{loc}(S)}E_\QG[S_T|\F_t]
\]
for every $t\in [0,T]$, the process $\pi(S)$ defined in \eqref{superreplication} is a $\QG$-local martingale for each $\QG\in\mathcal{M}_{loc}(S)$. This implies, according to Theorem 3.1 in \cite{kramov}, that the minimal superreplicating portfolio is \emph{self-financing} and thus that there exists $\bar{\QG}\in\mathcal{M}_{loc}(S)\cap \mathcal{M}_{UI}(S)$, contradicting the hypothesis.
\end{proof}

\begin{rem}
It is possible to obtain the same result also in the \emph{dominated case}, i.e.\ when in place of considering $\mathcal{M}_{loc}(S)$ we look at
\[
\QC=\{\QG\in\mathfrak{P}(\Omega)\;|\; \QG\ll \PG,\; S\text{ is a $\QG$-local martingale}\}.
\]
In this context, if $\QC$ is \emph{m-stable}, it is possible to define the asset fundamental value as
\[
S_t^\ast= \esssup_{\QG\in\QC} E_\QG[S_T|\FC_t],
\]
for any $t\in[0,T]$ (for this result and the definition of m-stability we refer to \cite{delbaen}). However, as the measures in $\QC$ which are equivalent to $\PG$ are dense in $\QC$ (see again \cite{delbaen}), it holds
\[
S_t^\ast= \esssup_{\QG\in\QC} E_\QG[S_T|\FC_t]=\sup_{\QG\in\mathcal{M}_{loc}(S)}E_\QG[S_T|\F_t],
\]
so that Proposition \ref{protter vs schweizer} also applies also in this case.
\end{rem}

\subsection{Robust Fundamental Value}\label{RFV}
We start this section with some considerations, fixing which are the requirements we demand to our robust model for bubbles. They are the following:
\begin{itemize}
\item[(i)] When $\mathcal{Q}$ boils down to a singleton the model must collapse into one of the two approaches mentioned in the previous section. This already tells us that the robust fundamental value should be defined in terms of some conditional expectation.
\item[(ii)] If a bubble is detected under one $\mathbb{Q}\in\mathcal{Q}$, this does not have to automatically mean that the bubble is robust.
\end{itemize}

\noindent
The initial step required is the introduction of a robust concept for the fundamental value that can be meaningfully used in our setting. A first try, in analogy to what happens to arbitrage, is to define asset bubble the case when there exists a $\mathbb{Q}\in\mathcal{Q}$ such that the fundamental value under $\mathbb{Q}$ is lower than the market value, whilst being (lower or) equal under all other priors. This naive first definition is not well-posed as any classical bubble will be turn into a robust bubble. To overcome this problem we could decide to define a `$\mathbb{Q}$-fundamental value' under each prior. To be consistent with the existing literature and recover the traditional setup of \cite{protter_incomplete} when $\mathcal{Q}$ consists of only one probability measure, we could define 
\[
S^{\ast,\mathbb{Q}}_t=E_{\mathbb{Q}}[S_T|\mathcal{F}_t]
\]
to be the fundamental value under $\mathbb{Q}\in\mathcal{Q}$. This class of $\mathbb{Q}$-fundamental values will not eventually be aggregable (this is already the case with the $G$-setting, see \cite{soner_representation}). Intuition suggests to define a bubble the situation in which 
\[
\mathbb{Q}(S^{\ast,\mathbb{Q}}_t<S_t)>0
\]
for each $\mathbb{Q}\in\mathcal{Q}$ and some $t>0$. Alternatively stated we would say that the asset $S$ is a $\mathcal{Q}$-bubble if it is a $\mathbb{Q}$-bubble for every $\mathbb{Q}\in\mathcal{Q}$. It is natural to run here a parallel with the notion of arbitrage. In \cite{Vorbrink} a \emph{robust arbitrage} is defined as a trading strategy that requires zero initial wealth, but excludes losses quasi surely (i.e. $\mathbb{Q}$-a.s.\ for all $\mathbb{Q}\in\mathcal{Q}$) and delivers a positive gain with positive probability for at least one $\mathbb{Q}\in\mathcal{Q}$. It would be possible to strengthen this definition by further requiring that a robust arbitrage should generate profit with positive probability for all $\mathbb{Q}\in\mathcal{Q}$, but this would be too strong. This is precisely the same problem hidden in our first definition of robust bubble. \\
There is also a deeper issue, which is peculiar to the nature of our framework. In a market model that is intrinsically incomplete, it is not immediate to transpose the concept of `expected future payoffs' from the classical setting to the modeling under uncertainty, because of the absence of a linear pricing system. The first naive definition of robust fundamental value that we propose above is actually linked to this notion, as it coincides with the approach of \cite{protter_incomplete} when $\QG$ reduces to a singleton. \\
This definition of fundamental value as superreplication price provides a clear financial interpretation and is more suitable for this setting. 
\begin{df}
\label{rfv}
We call robust fundamental value the process $S^\ast=(S_t^\ast)_{t\in[0,T]}$ where
\beq
\label{rfve}
S^{\ast}_t=\esssup_{\mathbb{Q}^\prime\in \mathcal{Q}(t,\mathbb{Q})} E_{\mathbb{Q}^\prime}[S_T|\mathcal{F}_t],\quad \QG-a.s.
\eeq
for every $\mathbb{Q}\in\mathcal{Q}$, where $\mathcal{Q}(t,\mathbb{Q})=\{\QG^\prime\in\mathcal{Q}:\;\mathbb{Q}^\prime=\mathbb{Q} \text{ on } \mathcal{F}_t\}$. There is a \emph{robust bubble} if there exists a stopping time $\tau$ such that 
\[
\QG(S_\tau > S_\tau^{\ast})>0
\]
for a $\mathbb{Q}\in\mathcal{Q}$. We denote the robust bubble by $\beta=(\beta_t)_{t\in[0,T]}$, where
\beq
\beta_t := S_t - S_t^{\ast}.
\eeq
\end{df}

\noindent
As opposed to the previous definition, it is not necessary to have a bubble under all scenarios to have a robust bubble. The parallel with the notion of robust arbitrage becomes now evident. As $S$ is a positive $\QG$-local martingale, hence a $\QG$-supermartingale, we have 
\[
S_t\geq S_t^{\ast},\qquad \QG-a.s.
\]
for every $t\in[0,T]$ and $\QG\in\QC$. There is now a robust bubble in the market at a stopping time $\tau$ if there exists a scenario (a probability measure $\bar{\QG}\in\mathcal{Q}$) such that the robust fundamental value is smaller than the market value with positive probability and all probabilities that coincide with $\bar{\QG}$ on $\FC_\tau$ agree on this view. Requirement (ii) in particular becomes more evident: when
\[
\QG(S^{\ast}_t< S_t)>0,
\]
then the same holds for any $\mathbb{Q}^\prime\in \mathcal{Q}(t,\mathbb{Q})$. 

Our definition of robust bubble extends the approach where fundamental prices are given by superreplication prices to the framework under uncertainty. Some extra attention has to be taken on this point. Because of the results in \cite{nutz_jumps}, $S^\ast_0$ can be interpreted in terms of superreplication price if the family $\QC$ is \emph{saturated}, i.e.\ if for every $\QG\in\QC$ all sigma martingale measures equivalent to $\QG$ are contained in $\QC$. In this case
\[
S_0^\ast=\inf\{x\in\RG\;:\;\exists \;H\in\mathcal{H} \text{ with } x+(H\cdot S)_T\geq S_T\; \QG-a.s. \text{ for all } \QG\in\QC\}
\]
and the link with the literature with a unique prior is evident as we show it in the following proposition.

\begin{prop}\label{classicrobust}
Let $\PC=\{\PG\}$, then the robust fundamental value \eqref{rfve} coincides with the classical superreplication price, i.e.
\beq
\label{eq1}
\esssup_{\QG\in\QC} E_\QG[S_T|\F_t]= \esssup_{\QG^\prime\in\QC(t,\QG)} E_{\QG^\prime}[S_T|\F_t]\quad a.s.
\eeq
\end{prop}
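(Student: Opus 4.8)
The plan is to prove the identity by establishing the two inequalities separately. Since $\QC(t,\QG)\subseteq\QC$, the essential supremum over the smaller family can only be smaller, so the inequality
\[
\esssup_{\QG^\prime\in\QC(t,\QG)} E_{\QG^\prime}[S_T|\F_t]\leq \esssup_{\QG\in\QC} E_\QG[S_T|\F_t]
\]
is immediate and carries no content. The whole difficulty is concentrated in the reverse inequality.

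To obtain it, I would fix an arbitrary $\QG^{\prime\prime}\in\QC$ and construct a measure $\QG^\prime\in\QC(t,\QG)$ having the same conditional expectation of $S_T$ given $\F_t$. In the dominated case $\PC=\{\PG\}$ every element of $\QC$ is equivalent to $\PG$, so the density process $Z_s:=d\QG^{\prime\prime}/d\QG|_{\F_s}$ is a well-defined strictly positive $\QG$-martingale. The natural candidate is the pasting of $\QG$ and $\QG^{\prime\prime}$ at time $t$, namely the measure $\QG^\prime$ with density
\[
\frac{d\QG^\prime}{d\QG}\Big|_{\F_s}:=
\begin{cases}
1, & s\leq t,\\
Z_s/Z_t, & s>t.
\end{cases}
\]
A direct check shows that the right-hand side is again a $\QG$-martingale taking the value $1$ at time $t$, whence $\QG^\prime=\QG$ on $\F_t$; and an application of the Bayes rule gives
\[
E_{\QG^\prime}[S_T|\F_t]=\frac{1}{Z_t}\,E_\QG[Z_T\,S_T|\F_t]=E_{\QG^{\prime\prime}}[S_T|\F_t].
\]
Consequently $E_{\QG^{\prime\prime}}[S_T|\F_t]\leq\esssup_{\QG^\prime\in\QC(t,\QG)}E_{\QG^\prime}[S_T|\F_t]$, and since this holds for every $\QG^{\prime\prime}\in\QC$ against a right-hand side that does not depend on $\QG^{\prime\prime}$, taking the essential supremum over $\QG^{\prime\prime}$ on the left yields the missing inequality.

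The step I expect to be the main obstacle is the verification that $\QG^\prime\in\QC$, i.e.\ that pasting the post-$t$ dynamics of $\QG^{\prime\prime}$ onto the pre-$t$ dynamics of $\QG$ preserves the local martingale property of $S$. This is precisely the stability under pasting of the family of local martingale measures: in the present dominated setting it is the classical m-stability of $\mathcal{M}_{loc}(S)$ recalled in the Remark following Proposition~\ref{protter vs schweizer} (see \cite{delbaen}), while in the general framework it is encoded in condition (iii) of Assumption~\ref{ass}. Once $\QG^\prime\in\QC$ is secured, the two inequalities combine into the claimed equality, which holds unambiguously a.s.\ because all measures in $\QC$ share the null sets of $\PG$.
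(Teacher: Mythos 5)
Your proposal is correct and follows essentially the same route as the paper's proof: both reduce the claim to pasting an arbitrary $\QG^{\prime\prime}\in\QC$ onto $\QG$ at time $t$ via the density process $Z_s^1 \ind{s\leq t} + Z_t^1\frac{Z_s^2}{Z_t^2}\ind{t<s}$, invoke Proposition 9.1 of \cite{delbaen} (m-stability) to confirm the pasted measure is again an ELMM lying in $\QC(t,\QG)$, and conclude by the Bayes-rule identity $E_{\QG^\prime}[S_T|\F_t]=E_{\QG^{\prime\prime}}[S_T|\F_t]$ together with the trivial inclusion inequality. The only cosmetic difference is that the paper phrases the key step as an equality of essential suprema over $\QC(t,\QG^1)$ and $\QC(t,\QG^2)$ for arbitrary $\QG^1,\QG^2\in\QC$, whereas you paste directly onto the fixed $\QG$; the underlying construction and citation are identical.
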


\begin{proof}
Note that if $\PC=\{\PG\}$ then $\QC=\{\QG\;|\;\QG\approx \PG,\;\QG \text{ ELMM}\}$ is made of measures equivalent to each other. We prove that  
\beq
\label{eq2}
\esssup_{\QG\in\QC(t,\QG^1)} E_\QG[S_T|\F_t]= \esssup_{\QG^\prime\in\QC(t,\QG^2)} E_{\QG^\prime}[S_T|\F_t]\quad a.s.
\eeq
for every $\QG^1, \QG^2\in\QC$ by a measure pasting argument similar to Proposition 9.1 in \cite{delbaen}. This suffices to conclude as
\[
\esssup_{\QG\in\QC} E_\QG[S_T|\F_t] \geq \esssup_{\QG^\prime\in\QC(t,\QG)} E_{\QG^\prime}[S_T|\F_t],
\]
but \eqref{eq2} also guarantees
\[
\esssup_{\QG\in\QC} E_\QG[S_T|\F_t] \leq \esssup_{\QG\in\QC}\left\{ \esssup_{\QG^\prime\in\QC(t,\QG)} E_{\QG^\prime}[S_T|\F_t]\right\} = \esssup_{\QG^\prime\in\QC(t,\QG)} E_{\QG^\prime}[S_T|\F_t].
\]

\noindent
Assume $\QG^2\in \QC\setminus \QC(t,\QG^1)$, otherwise the claim is trivial. We notice that 
\[
\frac{d\QG}{d\PG}|_{\F_t}=\frac{d\QG^i}{d\PG}|_{\F_t}:=Z_t^i,
\]
for every $\QG\in\QC(t,\QG^i)$, $i=1,2$. This is clear as, for every $A\in\F_t$, it must hold
\[
E_{\PG}[Z_t^i{\bf{1}}_{A}]=E_{\QG^i}[{\bf{1}}_{A}]=\QG^i(A)=\QG(A)=E_{\QG}[{\bf{1}}_{A}]=E_{\PG}\left[\frac{d\QG}{d\PG}|_{\F_t}{\bf{1}}_{A}\right].
\]
Define now
\[
Z_s:= Z_s^1 \ind{s\leq t} + Z_t^1\frac{Z_s^2}{Z_t^2}\ind{t<s},
\]
which is the Radon-Nykodim derivative of an ELMM, as proven in Proposition 9.1 in \cite{delbaen}. The measure $\QG^{\prime}$ associated to $(Z_s)_{s\in[0,T]}$ thus belongs to $\QC(t,\QG^1)$ and satisfies
\begin{align}
\notag E_{\QG^{\prime}}[S_T|\F_t]&=\frac{E_{\PG}[S_TZ_T|\F_t]}{Z_t}=\frac{E_{\PG}\left[S_TZ_t^1\frac{Z_T^2}{Z_t^2}|\F_t\right]}{Z_t^1}\\
\notag &=\frac{E_{\PG}[S_TZ_T^2|\F_t]}{Z_t^2}=E_{\QG^{2}}[S_T|\F_t].
\end{align}
This shows that for every $\QG\in\QC(t,\QG^2)$ there exists a $\QG^\prime\in\QC(t,\QG^1)$ such that $E_{\QG^{\prime}}[S_T|\F_t]= E_{\QG}[S_T|\F_t]$. This is enough to establish \eqref{eq2}. 
\end{proof}

\begin{rem}
Note that in Proposition \ref{classicrobust} we can consider almost sure equalities as $\PC=\{\PG\}$ and we are dealing with the set of martingale measures equivalent to $\PG$.
\end{rem}

\noindent
Saturation is a condition that we do not enforce on our model, but is automatically satisfied if every $\QG$-market is complete. In all other cases the arise of a bubble may be caused either by a difference between the market price and the superreplication price or by a duality gap in 
\[
\sup_{\QG\in\QC} E_\QG[S_T]\leq \inf\{x\in\RG\;:\;\exists \;H\in\mathcal{H} \text{ with } x+(H\cdot S)_T\geq S_T\; \QG-a.s. \text{ for all } \QG\in\QC\}.
\]
This second situation is precisely the one considered in \cite{cox_robust} in order to detect a robust bubble. This means that $S^\ast$ can always be viewed at least as the worst model price, among the models considered by the investor.

\subsection{Properties and Examples}
\begin{lm}
The robust bubble $\beta$ is a positive $\QG$-local submartingale for every $\QG\in\QC$, such that $\beta_T=0$ q.s. Moreover, if there exists a bubble, $S$ is not a $\QC$-martingale.
\end{lm}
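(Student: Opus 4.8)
The plan is to treat the three assertions separately, with the $\QG$-local submartingale property as the core, and to dispose first of the two easy claims. Positivity of $\beta$ is essentially already recorded in the discussion following Definition \ref{rfv}: since each $\QG\in\QC$ is a local martingale measure, $S$ is a nonnegative $\QG$-local martingale, hence a $\QG$-supermartingale, so that $S_t\geq E_{\QG^\prime}[S_T|\F_t]$ for every $\QG^\prime\in\QC(t,\QG)$; taking the essential supremum over such $\QG^\prime$ gives $S_t\geq S^\ast_t$ and thus $\beta_t\geq 0$ $\QG$-a.s. For the terminal value, since $S_T$ is $\F_T$-measurable we have $S^\ast_T=\EC_T(S_T)=\esssup_{\QG^\prime\in\QC(T,\QG)}E_{\QG^\prime}[S_T|\F_T]=S_T$ $\QG$-a.s., and as this holds for every $\QG\in\QC$ we obtain $\beta_T=0$ quasi-surely.

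The main step is to show that $S^\ast$ is a genuine $\QG$-supermartingale for each $\QG\in\QC$; the submartingale property of $\beta$ then follows by localization. Here I would identify $S^\ast_t$ with $\EC_t(S_T)$ and invoke the time-consistency of the sublinear expectation from Theorem \ref{robustsetting}. Concretely, for $s\leq t$ the tower property \eqref{nutz1} gives $S^\ast_s=\EC_s(S_T)=\EC_s(\EC_t(S_T))=\EC_s(S^\ast_t)$, and the representation \eqref{nuzt2} applied to $S^\ast_t$ (which is upper semianalytic by Theorem \ref{robustsetting}) yields
\[
S^\ast_s=\esssup_{\QG^\prime\in\QC(s,\QG)}E_{\QG^\prime}[S^\ast_t|\F_s]\geq E_\QG[S^\ast_t|\F_s]\quad \QG\text{-a.s.},
\]
the inequality coming from the admissible choice $\QG^\prime=\QG\in\QC(s,\QG)$. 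Integrability is not an issue because $0\leq S^\ast_t\leq S_t$ and $E_\QG[S_t]\leq S_0<\infty$ by the supermartingale property of $S$, so $S^\ast$ is an adapted, integrable process satisfying the supermartingale inequality under every $\QG$.

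With $S^\ast$ a nonnegative $\QG$-supermartingale in hand, I would conclude the submartingale statement by a standard localization. Let $(\sigma_n)_n$ be a localizing sequence for the $\QG$-local martingale $S$, so that $S^{\sigma_n}$ is a true $\QG$-martingale. Stopping preserves the supermartingale property, so $(S^\ast)^{\sigma_n}$ remains a $\QG$-supermartingale, and for $s\leq t$
\[
E_\QG[\beta^{\sigma_n}_t|\F_s]=S^{\sigma_n}_s-E_\QG[(S^\ast)^{\sigma_n}_t|\F_s]\geq S^{\sigma_n}_s-(S^\ast)^{\sigma_n}_s=\beta^{\sigma_n}_s,
\]
so each $\beta^{\sigma_n}$ is a $\QG$-submartingale and $\beta$ is a $\QG$-local submartingale. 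For the last assertion I would argue by contraposition: if $S$ were a $\QC$-martingale, then by definition $S_t=\EC_t(S_T)=S^\ast_t$ $\QG$-a.s. for every $t$ and every $\QG$, hence $\beta\equiv 0$ quasi-surely, which is incompatible with the existence of a stopping time $\tau$ and a prior $\QG$ with $\QG(S_\tau>S^\ast_\tau)>0$.

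I expect the delicate point to be the second paragraph: passing rigorously from the pointwise, upper-semianalytic identity $\EC_s(S_T)=\EC_s(\EC_t(S_T))$ to the $\QG$-a.s.\ supermartingale inequality for $S^\ast$. This requires keeping track of the measurability of $S^\ast_t$ (so that \eqref{nuzt2} applies), reconciling the raw filtration $\FG$ used in the conditioning with the universally completed $\FG^\ast$, and ensuring that a single version of $S^\ast$ serves simultaneously as a legitimate supermartingale under every $\QG\in\QC$. Once $S^\ast$ is established as a true supermartingale, the localization and the final contraposition are routine.
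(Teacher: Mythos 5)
Your proof is correct and takes essentially the same approach as the paper: the paper's one-line argument is precisely that $\beta = S - S^\ast$ is the difference of a $\QG$-local martingale and a $\QG$-supermartingale, which is exactly your decomposition. You merely spell out what the paper treats as immediate, namely that $S^\ast$ is a genuine $\QG$-supermartingale for each $\QG\in\QC$ (via the time consistency in Theorem \ref{robustsetting}) together with the localization and the final contraposition.
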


\begin{proof}
This immediately follows from Definition \ref{rfv}, as $\beta$ is the difference between a $\QG$-local martingale and a $\QG$-supermartingale. 
\end{proof}

\noindent
The local submartingale characterization is not at all a contradiction as it could seem at a first sight. In fact, as opposed to local supermartingales, local submartingales do not have to be true submartingales. There are a variety of examples of local submartingales with nonstandard behavior, such as decreasing mean, as shown in \cite{yor_strictlocal} and \cite{protter_htransform}. To mention a clear example, it suffices to consider the class of positive local martingales: such processes are positive local submartingale and also supermartingales. \\
We can present one first example of robust bubble by adapting one result of \cite{hobson_bubbles} to the context of $G$-expectation.

\begin{rem}
We highlight that in Example \ref{ex1}, Example \ref{ex2} and Example \ref{ex3} the asset price $S$ is a $\QG$-local martingale for every $\QG\in\QC$ under the completed filtration $\FG^\QG=\{\FC_t^\QG\}_{t\geq 0}$. However, being a positive process adapted to $\FG^\ast\subseteq\FG^\QG$, $S$ is also a $\QG$-local martingale with respect to the filtration $\FG^\ast$, thanks to a result from \cite{stricker} that we report in the formulation of Theorem 10 from \cite{protter_shrink}.
\begin{thm}
Let $X$ be a positive local martingale for $\GG$ and assume that $X$ is adapted to the subfiltration $\FG$. Then $X$ is also a local martingale for $\FG$.
\end{thm}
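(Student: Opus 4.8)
The plan is to leverage positivity so as to reduce the whole statement to a fact about \emph{genuine} martingales, for which the passage to the subfiltration is immediate via the tower property. The guiding principle is that a positive local martingale is automatically a supermartingale, and this supermartingale bound supplies exactly the integrability needed to turn a well-chosen localization into true $\GG$-martingales, whose martingale property then descends to $\FG$ by iterated conditioning.

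First I would record that $X$, being a positive $\GG$-local martingale, is a $\GG$-supermartingale: applying conditional Fatou along its $\GG$-localizing sequence gives $E[X_t\mid\GC_s]\le X_s$ and in particular $E[X_t]\le E[X_0]<\infty$, so each $X_t$ is integrable. Since $X$ is $\FG$-adapted and $\F_s\subseteq\GC_s$, conditioning once more yields $E[X_t\mid\F_s]=E\big[E[X_t\mid\GC_s]\mid\F_s\big]\le X_s$, so $X$ is already a positive $\FG$-supermartingale; what remains is to upgrade ``supermartingale'' to ``local martingale'' on $\FG$.

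Next I would introduce the localizing sequence $\sigma_n:=\inf\{t\ge0:\ X_t>n\}$. Because $X$ is $\FG$-adapted and $\FG$ may be taken to satisfy the usual conditions, the d\'ebut theorem makes each $\sigma_n$ an $\FG$-stopping time, while the finiteness of the c\`adl\`ag paths on compacts forces $\sigma_n\uparrow\infty$ q.s. As $\F_t\subseteq\GC_t$, every $\sigma_n$ is also a $\GG$-stopping time, so $X^{\sigma_n}$ stays a $\GG$-local martingale. The crucial step is that $X^{\sigma_n}$ is in fact a \emph{true} $\GG$-martingale: one has $X_{t\wedge\sigma_n}\le n+X_{\sigma_n}\ind{\sigma_n<\infty}$, and optional sampling for the supermartingale $X$ (from $E[X_{\sigma_n\wedge m}]\le E[X_0]$ and Fatou as $m\to\infty$) shows $X_{\sigma_n}\ind{\sigma_n<\infty}$ is integrable. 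Hence $X^{\sigma_n}$ is a positive $\GG$-local martingale dominated by a fixed integrable random variable, so it is of class (D) and therefore a uniformly integrable $\GG$-martingale.

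Finally I would transfer the martingale property to $\FG$: for $s\le t$, using that $X^{\sigma_n}$ is a genuine $\GG$-martingale and that $\F_s\subseteq\GC_s$,
\[
E[X_{t\wedge\sigma_n}\mid\F_s]=E\big[E[X_{t\wedge\sigma_n}\mid\GC_s]\mid\F_s\big]=E[X_{s\wedge\sigma_n}\mid\F_s]=X_{s\wedge\sigma_n},
\]
the last equality holding because $X_{s\wedge\sigma_n}$ is $\F_s$-measurable ($X$ is $\FG$-adapted and $\sigma_n$ is an $\FG$-stopping time). Thus each $X^{\sigma_n}$ is an $\FG$-martingale, and letting $n\to\infty$ exhibits $(\sigma_n)$ as an $\FG$-localizing sequence, proving that $X$ is an $\FG$-local martingale. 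The main obstacle is precisely the single step upgrading $X^{\sigma_n}$ from a $\GG$-local to a true $\GG$-martingale: this is where positivity is indispensable, since it is the supermartingale bound that controls the possible upward jump of $X$ at $\sigma_n$ and delivers the integrable domination. In the continuous setting relevant to the paper this step is trivial, as $X^{\sigma_n}\le n$ makes $X^{\sigma_n}$ bounded outright.
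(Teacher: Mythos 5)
Your proof is correct. Note that the paper itself gives no proof of this statement: it is quoted verbatim from the literature (Stricker's theorem in the formulation of Theorem 10 of F\"ollmer--Protter, \emph{Local martingales and filtration shrinkage}), and your argument is essentially the standard one found there -- localize with the $\FG$-hitting times $\sigma_n=\inf\{t: X_t>n\}$, use positivity to get the supermartingale bound that dominates $X^{\sigma_n}$ by the integrable variable $n+X_{\sigma_n}\ind{\sigma_n<\infty}$, conclude that $X^{\sigma_n}$ is a true $\GG$-martingale of class (D), and then push the martingale property down to $\FG$ by the tower property. The only hypothesis you rely on beyond the statement is right-continuity of $\FG$ (so that the $\sigma_n$ are $\FG$-stopping times), which you flag explicitly and which is part of the usual-conditions setting of the cited reference, so there is no gap.
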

\end{rem}

\begin{ex}\label{ex1}
Let $\QC= \PC_{\textbf{D}}$ as in Proposition \ref{G}, where $\textbf{D}=[\underline{\sigma}^2,\overline{\sigma}^2]\subset \mathbb{R}_+\setminus \{0\}$. Let $S_0=s>0$ and 
\beq
S_t=s+\int_0^t \frac{S_u}{\sqrt{T-u}}dB_u,\quad t\in[0,T).
\eeq
We show that $S$ is a price process with a robust bubble by showing that $S$ is a positive $\QG$-local martingale for every $\QG\in\QC$, with terminal value equal to zero. To this purpose, let us fix a prior $\QG$.
We have that 
\[
S_t=se^{\int_0^t\varphi_sdB_s-\frac{1}{2}\int_0^t\varphi_sd\langle B\rangle_s},\quad t\in[0,T).
\]
The stochastic integral $\int_0^\cdot 1/ \sqrt{T-s} dB_s$ is a $\QG$-local martingale on $[0,T)$, such that the quadratic covariation is
\[
\left[ \int_0^\cdot 1/ \sqrt{T-s} dB_s,\int_0^\cdot 1/ \sqrt{T-s} dB_s\right]_u\geq -\underline{\sigma}^2\ln\left[1-\frac{u}{T}\right],
\]
and continuous on $[0,T)$. Using the same argument as in Lemma 5 from \cite{protter_complete}, which exploits the Dubins-Schwarz theorem together with the law of the iterated logarithm, we can argue that 
\beq\label{asconv}
\lim_{u\to T} S_u=0\qquad \QG-a.s.
\eeq
Hence we set $S_T=0$ so that $S$ is quasi continuous on $[0,T]$. This follows as the set $\{\omega\in\Omega:\;\lim_{u\to T}S_u\neq 0\}$ is polar: the existence of $\QG\in\QC$ such that $\QG(\lim_{u\to T}S_u\neq 0)>0$ is in fact in contradiction with \eqref{asconv}. Hence $S$ is not a $\QG$-martingale for any $\QG\in\QC$ as $E_\QG[S_T]=0<E_\QG[S_0]$, and in particular it is not a robust martingale.
\end{ex}

\noindent
Another example comes from adapting the concept of Bessel process in the context of Proposition \ref{const}.

\begin{ex}\label{ex2}
We consider $\QC=\PC_{\textbf{D},const}$, where $\textbf{D}\subset\RG^{3\times 3}$ is made of those matrices $(a_{i,j})_{i,j=1,2,3}$ such that $a_{i,j}=0$ for all $i\neq j$, $a_{1,1}=a_{2,2}=a_{3,3}\in [1,2]$ in order to fix some values.\\
We consider the process given by $f(B)=(f(B_t))_{t\geq 0}$ where $f(x,y,z)=(x^2+y^2+z^2)^{-\frac{1}{2}}$. As $f$ is Borel-measurable, we can compute the sublinear expectation $\EC_0(f(B_t))$ for any $t\geq 0$, according to Theorem \ref{robustsetting}. It is a well known result that $f(B)$ is a $\QG$-local martingale for all $\QG\in\PC_{\textbf{D},const}$. To prove that the price process has a robust bubble, it thus suffices to show that $f(B)$ is not a $\PC_{\textbf{D},const}$-martingale. This can be done using a standard argument, for which we first compute
\begin{align*}
\mathcal{E}_0\left(f^2(B_t)\right)=&\sup_{\QG\in\PC_{\textbf{D},const}}E_\QG\left[f^2(B_t)\right]\\
&=\sup_{a\in[1,2]}\frac{1}{(2\pi at)^{3/2}}\int_{\RG^3}\frac{1}{x_1^2+x_2^2+x_3^2}\,\exp\left(-\frac{x_1^2+x_2^2+x_3^2}{2at}\right)\,d x_1d x_2d x_3\\
&\leq C \frac{1}{t},
\end{align*}
for some $C\in\RG_+$. As 
\[
E_\QG\left[f^2(B_t)\right]\geq E_\QG\left[f(B_t)\right]^2,
\]
for any $\QG\in\PC_{\textbf{D},const}$ implies
\[
\sup_{\QG\in\PC_{\textbf{D},const}}E_\QG\left[f^2(B_t)\right]\geq \sup_{\QG\in\PC_{\textbf{D},const}}E_\QG\left[f(B_t)\right]^2,
\]
we have 
\[
0\leq \EC_0\left[f(B_t)\right]^2\leq  \EC_0\left[f^2(B_t)\right]\to0,
\]
as $t\to\infty$. This prevents $\EC_0\left(f(B_t)\right)$ to be constant and thus $f(B)$ from being a $\PC_{\textbf{D},const}$-martingale.
\end{ex}

\noindent
In both Example \ref{ex1} and Example \ref{ex2}, the risky asset is a strict $\QG$-local martingale for every $\QG\in\QC$. This means that the bubble in those cases is perceived under all priors which are possible in the model.\\
By slightly modifying the framework of Example \ref{ex2}, we are able to obtain a bubble that is a $\bar{\QG}$-martingale for a particular $\bar{\QG}\in\QC$. Thus, despite the bubble being robust, an investor endowed only with the prior $\bar{\QG}$ would not detect it. This is one of the main novelty of our model: when a bubble arises it will be identified by an agent whose \emph{significative sets} are those with positive probability under any $\QG\in\QC$; alternatively stated this agent considers negligible only the \emph{polar sets}, i.e.\ those $A\in\FC$ such that $\QG(A)=0$ for all $\QG\in\QC$. However a \emph{short-sighted} investor, who neglects only the $\bar{\QG}$-null sets, will not spot the bubble. 

\begin{ex}\label{ex3}
We consider $\QC=\PC_{\textbf{D},const}$ as in Example \ref{ex2} but now we choose $\textbf{D}$ in a way to allow for a degenerate case, where there exists $\bar{Q}\in\QC$ such that the canonical process is constantly equal to $0$. We do this by considering the same setup as in Example \ref{ex2}, but choosing $a_{1,1}=a_{2,2}=a_{3,3}\in [0,2]$. Exactly as in Example \ref{ex2}, $f(B)$ is a $\QG$-local martingale for every $\QG\in\QC$. However under the `degenerate prior' $\bar{\QG}$, associated to a volatility constantly equal to $0$, every process turns deterministic. This implies in particular that $f(B)$ is a true $\bar{\QG}$-martingale, while being a strict $\QG$-local martingale for all $\QG\in\QC\setminus\{\bar{\QG}\}$.
\end{ex}

\noindent
The examples regarding financial bubbles are usually obtained by showing specific asset dynamics with strict local martingale behavior. We give here an example of robust bubble by rather focusing our attention on the choice of probability measures included in the uncertainty framework. 

\begin{ex}\label{ex4}
We adopt here the financial model introduced in \cite{nutz_superhedging}. The major difference with respect to the setting presented in Section \ref{setting} is that we consider the set $\QC_S$ of laws 
\begin{equation}\label{ns}
\QG^\alpha:=\QG_0\circ (X^\alpha)^{-1},\quad\text{ where } \quad X_t^\alpha:=\int_0^t \alpha_s^{1/2}dB_s,\quad t\in[0,T].
\end{equation}
In \eqref{ns} $\QG_0$ denotes the Wiener measure, while $\alpha$ ranges over all the $\FG$-progressively measurable processes with values in $\mathbb{S}_d^{+}$ satisfying $\int_0^T |\alpha_s| ds<\infty$ $\QG_0$-a.s. Here $\mathbb{S}_d^{+}\subset \mathbb{R}^{d\times d}$ represents the set of all strictly positive definite matrices and the stochastic integral in \eqref{ns} is the It\^o integral under $\QG_0$. The set $\QC$ is asked to be stable under pasting, according to the following definition.

\begin{df}\label{paste}
The set $\QC$ is stable under $\FG$-pasting if for all $\QG\in\QC$, $\sigma$ stopping time taking finitely many values, $\Lambda\in \FC_\sigma$ and $\QG_1,\QG_2\in\QC(\sigma, \QG)$, the measure $\bar{\QG}$ defined by 
\begin{equation}\label{pastingm}
\bar{\QG}(A):= E_\QG \left[ \QG_1(A|\FC_\sigma){\bf{1}}_{\Lambda}+\QG_2(A|\FC_\sigma){\bf{1}}_{\Lambda^c}\right],\quad A\in\FC_T
\end{equation}
is again an element of $\QC$.
\end{df}

\noindent
Besides from that we leave all the other definitions stated in the preceding sections unchanged. Let then be given a risky asset $S$ such that there exists a $\QG^{\tilde{\alpha}}\in\QC_S$ for which the asset is a strict $\QG^{\tilde{\alpha}}$-local martingale. For example we can take $S$ to be the process described in Example \ref{ex1}. We then study how this fact can generate a robust bubble. To do this we consider a subset $\QC\subseteq\QC_S$ given by those $\QG^\alpha\in\QC_S$ for which 
\[
\alpha_s=\tilde{\alpha}_s \quad  \text{for }\; s\in(t,T]\;\; \QG_0-a.s. 
\]
for some $t\in(0,T)$. With such requirement, the set $\QC$ is stable under pasting, according to Definition \ref{paste}, thanks to the same proof of Lemma 3.3 in \cite{nutz_superhedging}. In other words, we are considering a subset of $\QC_S$ where there is no uncertainty after time $t$, and where the volatility on $(t,T]$ implies a strict local martingale behavior under at least one prior. It follows that, for every $s>t$,
\[
\esssup_{\QG\in\QC(s,\QG^{\tilde{\alpha}})} E_\QG [S_T|\FC_s] = E_{\QG^{\tilde{\alpha}}} [S_T|\FC_s] < S_s,
\]
thus implying the presence of a robust bubble.
\end{ex}

\noindent
We now investigate another interesting relation between robust and classical bubbles. The arguments in Section \ref{RFV} clarified that the existence of $\QG\in\QC$ and $t\in[0,T)$ such that  
\[
S_t> \esssup_{\QG^\prime\in\QC(t,\QG)} E_{\QG^\prime}[S_T|\FC_t]
\]
implies the presence of a bubble in the classical sense for all the $\QG^\prime$-markets with $\QG^\prime\in\QC(t,\QG)$. This is evident at least for two situations: when every $\QG$-market admits a unique ELMM or when fundamental prices are described as the expected value of future discounted payoffs. In these two cases we prove that a single classical bubble cannot generate a robust bubble, which corresponds to intuition. We do that by showing that set of priors $\QG\in\QC$ for which $S$ is a strict local martingale cannot be a singleton. To show this result we consider the setting outlined in Example \ref{ex4}. This choice allows at the same time to ease the computations and to infer some conclusions about the framework outlined in Section \ref{setting}, as both models can describe the $G$-setting.

\begin{prop}\label{pastingprop}
Consider the financial model introduced in Example \ref{ex4}. If $\bar{\QG}$ is the pasting of $\QG$, $\QG_1$ and $\QG_2$ at the stopping time $\sigma$ and $\Lambda\in\FC_\sigma$, as in \eqref{pastingm}, it holds
\begin{equation}
\label{claim1}
E_{\bar{\QG}}[Y|\FC_\tau]=E_\QG\left[ E_{\QG_1}[Y{\bf{1}}_\Lambda|\FC_\sigma]|\FC_\tau\right]+E_\QG\left[E_{\QG_2}[Y{\bf{1}}_{\Lambda^c}|\FC_\sigma]|\FC_\tau\right]
\end{equation}
for any positive $\FC_T$-measurable random variable $Y$ and stopping time $\tau$ such that $\tau(\omega)\leq \sigma(\omega)$ for every $\omega\in\Omega$.
\end{prop}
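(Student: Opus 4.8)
The plan is to verify that the right-hand side of \eqref{claim1}, call it $Z$, satisfies the two defining properties of $E_{\bar{\QG}}[Y|\FC_\tau]$: it is $\FC_\tau$-measurable, which is immediate from its construction as iterated $\QG$-conditional expectations given $\FC_\tau$, and it integrates correctly against every $\FC_\tau$-measurable test set. The whole argument rests on two structural facts, both consequences of the hypothesis $\tau \le \sigma$ pointwise: first, $\FC_\tau \subseteq \FC_\sigma$; and second, $\bar{\QG}$ coincides with $\QG$ on $\FC_\sigma$, hence on $\FC_\tau$.

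First I would record the unconditional pasting formula. Taking $Y = \mathbf{1}_A$ with $A \in \FC_T$ reduces \eqref{pastingm} to the identity $E_{\bar{\QG}}[Y] = E_\QG[E_{\QG_1}[Y|\FC_\sigma]\mathbf{1}_\Lambda + E_{\QG_2}[Y|\FC_\sigma]\mathbf{1}_{\Lambda^c}]$; extending by linearity to simple functions and then by monotone convergence, using positivity of $Y$ together with the conditional monotone convergence theorem, yields this formula for every positive $\FC_T$-measurable $Y$. Specializing to $A \in \FC_\sigma$ gives $\QG_i(A|\FC_\sigma) = \mathbf{1}_A$ and hence $\bar{\QG}(A) = \QG(A)$, which establishes the second structural fact above.

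Next I would fix $G \in \FC_\tau$ and compute $E_{\bar{\QG}}[Y\mathbf{1}_G]$ via the unconditional formula applied to $Y\mathbf{1}_G$. Since $G \in \FC_\tau \subseteq \FC_\sigma$, the indicator $\mathbf{1}_G$ pulls out of each inner conditional expectation $E_{\QG_i}[\cdot|\FC_\sigma]$, and because $\Lambda \in \FC_\sigma$ the factors $\mathbf{1}_\Lambda, \mathbf{1}_{\Lambda^c}$ can be absorbed inside, turning the integrand into $\mathbf{1}_G(E_{\QG_1}[Y\mathbf{1}_\Lambda|\FC_\sigma] + E_{\QG_2}[Y\mathbf{1}_{\Lambda^c}|\FC_\sigma])$. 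Applying the $\QG$-tower property across $\FC_\tau \subseteq \FC_\sigma$, which is legitimate since $\mathbf{1}_G$ is $\FC_\tau$-measurable, I insert an outer $E_\QG[\cdot|\FC_\tau]$ and identify the resulting bracket as exactly $Z$, so that $E_{\bar{\QG}}[Y\mathbf{1}_G] = E_\QG[Z\mathbf{1}_G]$. Finally, because $Z\mathbf{1}_G$ is $\FC_\tau$-measurable and $\bar{\QG} = \QG$ on $\FC_\tau$, I replace $E_\QG$ by $E_{\bar{\QG}}$ to obtain $E_{\bar{\QG}}[Y\mathbf{1}_G] = E_{\bar{\QG}}[Z\mathbf{1}_G]$ for all $G \in \FC_\tau$, which together with the $\FC_\tau$-measurability of $Z$ proves \eqref{claim1}.

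The main obstacle is purely bookkeeping: four distinct measures $\QG, \QG_1, \QG_2, \bar{\QG}$ appear, and one must track carefully under which measure each conditional expectation is taken and at which step the measurability relation $\FC_\tau \subseteq \FC_\sigma$ or the identity $\bar{\QG}|_{\FC_\sigma} = \QG|_{\FC_\sigma}$ is being invoked. Positivity of $Y$ removes all integrability concerns and justifies the monotone convergence passage, so no further hypotheses are required.
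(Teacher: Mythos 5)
Your proof is correct and takes essentially the same route as the paper's: both start from the unconditional pasting identity \eqref{pastingm} extended to positive random variables, test against positive $\FC_\tau$-measurable quantities, move the $\FC_\sigma$-measurable factors $\mathbf{1}_\Lambda$, $\mathbf{1}_{\Lambda^c}$ (and the test factor) in and out of the inner conditional expectations, apply the $\QG$-tower property over $\FC_\tau\subseteq\FC_\sigma$, and finally use that $\bar{\QG}=\QG$ on $\FC_\sigma$ to recognize the right-hand side of \eqref{claim1} as a version of $E_{\bar{\QG}}[Y|\FC_\tau]$. The only differences are cosmetic: you test with indicators $\mathbf{1}_G$, $G\in\FC_\tau$, where the paper uses general positive $\FC_\tau$-measurable $\varphi$ (and carries the indicator $\ind{\tau\leq\sigma}$, which is identically one under the pointwise hypothesis), and you spell out the monotone-convergence extension of \eqref{pastingm} and the identity $\bar{\QG}|_{\FC_\sigma}=\QG|_{\FC_\sigma}$, which the paper uses implicitly.
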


\begin{proof}
We follow a procedure similar to Lemma 6.40 in \cite{follmer_book} to prove \eqref{claim1}. Let $\tau$ be a stopping time and $Y$ a positive $\FC_T$-measurable random variable. By \eqref{pastingm} we have that 
\[
E_{\bar{\QG}}[Y]=E_\QG\left[ E_{\QG_1}[Y|\FC_\sigma]{\bf{1}}_\Lambda+E_{\QG_2}[Y|\FC_\sigma]{\bf{1}}_{\Lambda^c}\right],
\]
so that, for every positive $\FC_\tau$-measurable random variable $\varphi$, we can study the value of
\begin{equation}\label{past1}
E_{\bar{\QG}}[Y\varphi\ind{\tau\leq \sigma}].
\end{equation}
The expectation in \eqref{past1} can then be written as
\begin{align}
\notag E_{\bar{\QG}}[Y\varphi\ind{\tau\leq \sigma}]&=E_\QG\left[ E_{\QG_1}[Y\varphi\ind{\tau\leq \sigma}|\FC_\sigma]{\bf{1}}_\Lambda+E_{\QG_2}[Y\varphi\ind{\tau\leq \sigma}|\FC_\sigma]{\bf{1}}_{\Lambda^c}\right]\\
\notag &=E_\QG\left[ E_{\QG_1}[Y\varphi\ind{\tau\leq \sigma}{\bf{1}}_\Lambda|\FC_\sigma]+E_{\QG_2}[Y\varphi\ind{\tau\leq \sigma}{\bf{1}}_{\Lambda^c}|\FC_\sigma]\right]\\
\label{past2} &=E_\QG\Big[E_\QG\left[ E_{\QG_1}[Y{\bf{1}}_\Lambda|\FC_\sigma]|\FC_\tau\right]\varphi\ind{\tau\leq \sigma}+E_\QG\left[E_{\QG_2}[Y{\bf{1}}_{\Lambda^c}|\FC_\sigma]|\FC_\tau\right]\varphi\ind{\tau\leq \sigma}\Big]\\
\notag &=E_{\bar{\QG}}\Big[E_\QG\left[ E_{\QG_1}[Y{\bf{1}}_\Lambda|\FC_\sigma]|\FC_\tau\right]\varphi\ind{\tau\leq \sigma}+E_\QG\left[E_{\QG_2}[Y{\bf{1}}_{\Lambda^c}|\FC_\sigma]|\FC_\tau\right]\varphi\ind{\tau\leq \sigma}\Big]\\
\notag &=E_{\bar{\QG}}\left[\left(E_\QG\left[ E_{\QG_1}[Y{\bf{1}}_\Lambda|\FC_\sigma]|\FC_\tau\right]+E_\QG\left[E_{\QG_2}[Y{\bf{1}}_{\Lambda^c}|\FC_\sigma]|\FC_\tau\right]\right)\varphi\ind{\tau\leq \sigma}\right].
\end{align}

\noindent
Hence we can conclude that if $\tau\leq \sigma$ the equality \eqref{claim1} holds. 
\end{proof}

\begin{cor}
Consider $\bar{\QG}$ given by the pasting of $\QG$, $\QG_1$ and $\QG_2$ at the stopping time $\sigma$ and $\Lambda\in\FC_\sigma$, as in \eqref{pastingm}. If $S$ is a strict $\QG_1$-local martingale, then it is also a strict $\bar{\QG}$-local martingale.
\end{cor}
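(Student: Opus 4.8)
The plan is to read ``strict local martingale'' as ``local martingale that is not a true martingale'' and to split the argument into the easy part (that $S$ stays a $\bar{\QG}$-local martingale, hence a supermartingale) and the substantive part (that the true martingale property fails under $\bar{\QG}$). For the first part I would note that $\bar{\QG}\in\QC$: indeed $\QC$ is stable under $\FG$-pasting in the sense of Definition \ref{paste}, and $\bar{\QG}$ is by construction the pasting of $\QG,\QG_1,\QG_2$ at $\sigma$ on $\Lambda$. Since $\QC$ is a family of local martingale measures for $S$, the process $S$ is a $\bar{\QG}$-local martingale, and being positive it is a $\bar{\QG}$-supermartingale. Thus $E_{\bar{\QG}}[S_T|\FC_\rho]\leq S_\rho$ for every stopping time $\rho$, and it only remains to exhibit a time at which this inequality is strict on a set of positive $\bar{\QG}$-measure.

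The key step is to apply Proposition \ref{pastingprop} with $Y=S_T$ and $\tau=\sigma$ (admissible since $\sigma\leq\sigma$). Because $\Lambda\in\FC_\sigma$, the indicators can be pulled out of the inner conditional expectations, so that \eqref{claim1} collapses to
\beq
E_{\bar{\QG}}[S_T|\FC_\sigma]={\bf{1}}_{\Lambda}\,E_{\QG_1}[S_T|\FC_\sigma]+{\bf{1}}_{\Lambda^c}\,E_{\QG_2}[S_T|\FC_\sigma].
\eeq
In particular, on the event $\Lambda$ the robust conditional expectation of the payoff coincides with the one computed under $\QG_1$; this is exactly the mechanism that should transport the strictness from $\QG_1$ to $\bar{\QG}$, since on $\Lambda$ the pasted measure behaves after $\sigma$ precisely as $\QG_1$. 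Using that $\bar{\QG}=\QG=\QG_1$ on $\FC_\sigma$ (which follows directly from \eqref{pastingm} applied to sets in $\FC_\sigma$), the identity above reduces the whole problem to finding a subset of $\Lambda$ of positive $\bar{\QG}$-measure on which $E_{\QG_1}[S_T|\FC_\sigma]<S_\sigma$.

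The main obstacle is precisely to guarantee that the strict decrease of $S$ under $\QG_1$ actually sits where $\bar{\QG}$ still follows $\QG_1$, rather than being confined to $\Lambda^c$ after $\sigma$, where $\bar{\QG}$ switches to $\QG_2$; in full generality the conclusion would be delicate for this reason. In the framework of Example \ref{ex4} this difficulty disappears, and I would exploit its structure to finish: every prior in $\QC$ shares the same volatility $\tilde{\alpha}$ on $(t,T]$, and the strict local martingale feature of $S$ (inherited from Example \ref{ex1}) is generated exactly on this common tail. Since $\bar{\QG}\in\QC$ and $\QG_1\in\QC$ induce the same conditional law of the path on $(s,T]$ given $\FC_s$ for every $s\in(t,T)$, we get $E_{\bar{\QG}}[S_T|\FC_s]=E_{\QG_1}[S_T|\FC_s]$, and the right-hand side is strictly below $S_s$ on a set of positive probability because $S$ is a strict $\QG_1$-local martingale. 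This contradicts the $\bar{\QG}$-martingale property and shows that $S$ is a strict $\bar{\QG}$-local martingale; as $\bar{\QG}\neq\QG_1$ whenever the pasting is non-trivial, this simultaneously produces a second prior under which $S$ is strict, which is the use the corollary is intended for.
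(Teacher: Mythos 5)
Your opening moves coincide with the paper's, and they are correct: you make explicit that $\bar{\QG}\in\QC$ by stability under pasting (so $S$ stays a $\bar{\QG}$-local martingale, a point the paper leaves implicit), and you apply Proposition \ref{pastingprop} with $Y=S_T$ and $\tau=\sigma$, which indeed collapses to
\[
E_{\bar{\QG}}[S_T|\FC_\sigma]={\bf{1}}_{\Lambda}\,E_{\QG_1}[S_T|\FC_\sigma]+{\bf{1}}_{\Lambda^c}\,E_{\QG_2}[S_T|\FC_\sigma].
\]
You also correctly identify the real delicacy: the $\QG_1$-strictness must be visible at $\sigma$ and on $\Lambda$; if it hides in $\Lambda^c$ after $\sigma$, or entirely before $\sigma$, the pasted measure can perfectly well be a true martingale measure, so the statement read literally needs an extra condition. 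The paper handles exactly this point by silently strengthening the hypothesis: its proof assumes ``$\sigma$ is such that $E_{\QG_1}[S_T|\FC_\sigma]<S_\sigma$'' and then, for any stopping time $\tau\leq\sigma$, concludes via
\[
E_{\bar{\QG}}[S_T|\FC_\tau]=E_\QG\left[ E_{\QG_1}[S_T|\FC_\sigma]{\bf{1}}_\Lambda|\FC_\tau\right]+E_\QG\left[E_{\QG_2}[S_T|\FC_\sigma]{\bf{1}}_{\Lambda^c}|\FC_\tau\right]<E_\QG\left[S_\sigma|\FC_\tau\right]\leq S_\tau,
\]
using the $\QG_2$-supermartingale property on $\Lambda^c$ and optional sampling for the positive $\QG$-supermartingale $S$ in the last step. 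So the inequality at $\sigma$ is not something to be derived; it is the standing assumption that makes the corollary true in the general pasting framework.

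The gap in your proposal is how you close the argument. You retreat to the particular sub-model of Example \ref{ex4} in which every prior in $\QC$ shares the volatility $\tilde{\alpha}$ on $(t,T]$. But in that sub-model the corollary is vacuous: the whole content of Example \ref{ex4} is that there \emph{every} $\QG\in\QC$ makes $S$ a strict local martingale (the conditional expectations after time $t$ coincide under all priors and lie strictly below $S_s$), so no pasting argument is needed and nothing new is obtained. The corollary, by contrast, is meant for a general family that is stable under pasting in the sense of Definition \ref{paste}; its role in the paper is to show that in such a family the set of priors under which $S$ is a strict local martingale can never be a singleton, and your argument does not establish this, since it only treats a setting where strictness under one prior already forces strictness under all. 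There is also a secondary wrinkle even inside your restricted setting: the strictness you invoke, $E_{\QG_1}[S_T|\FC_s]<S_s$ with positive probability, is a statement under $\QG_1$, whereas to contradict the $\bar{\QG}$-martingale property you need it with positive $\bar{\QG}$-probability; this is harmless when, as in Example \ref{ex1}, the inequality holds quasi-surely (there $S_T=0$ q.s.), but you do not address it. The repair is not to shrink the model but to shrink the hypothesis as the paper does: assume $E_{\QG_1}[S_T|\FC_\sigma]<S_\sigma$ holds on $\Lambda$ with positive $\QG$-probability; then your own collapsed identity, together with $\bar{\QG}=\QG$ on $\FC_\sigma$ and $E_{\QG_2}[S_T|\FC_\sigma]\leq S_\sigma$, immediately gives $E_{\bar{\QG}}[S_T|\FC_\sigma]<S_\sigma$ on a set of positive $\bar{\QG}$-measure, which is the paper's conclusion.
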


\begin{proof}
As a consequence of Proposition \ref{pastingprop}, if $S$ is a strict $\QG_1$-local martingale and $\sigma$ is such that 
\[
E_{\QG_1}[S_T|\FC_\sigma]<S_\sigma,
\]
it holds
\begin{align*}
E_{\bar{\QG}}[S_T|\FC_\tau]&=E_\QG\left[ E_{\QG_1}[S_T{\bf{1}}_\Lambda|\FC_\sigma]|\FC_\tau\right]+E_\QG\left[E_{\QG_2}[S_T{\bf{1}}_{\Lambda^c}|\FC_\sigma]|\FC_\tau\right]\\
&=E_\QG\left[ E_{\QG_1}[S_T|\FC_\sigma]{\bf{1}}_\Lambda|\FC_\tau\right]+E_\QG\left[E_{\QG_2}[S_T|\FC_\sigma]{\bf{1}}_{\Lambda^c}|\FC_\tau\right]\\
&<E_\QG\left[ S_\sigma{\bf{1}}_\Lambda|\FC_\tau\right]+E_\QG\left[S_\sigma{\bf{1}}_{\Lambda^c}|\FC_\tau\right]\\
& \leq  S_\tau,
\end{align*}
so that $S$ is also a strict $\bar{\QG}$-local martingale.
\end{proof}

\subsection{No Dominance}
In this section we investigate the implications of no dominance in our market model. This is a concept first appeared in \cite{merton}, that we report in the rigorous mathematical form stated in Definition 2.2.\ of \cite{marketefficiency} for the classical situation in which a unique prior $\QG$ exists.

\begin{df}
Let be given a financial market with $d$ securities $(S^1,\dots,S^d)$ in a filtered probability space $(\Omega,\FC,\FG=\{\FC_t\}_{t\in[0,T]},\PG)$. $H$ is an admissible strategy if it is an $\FG$-predictable and $S$-integrable process such that $H\cdot S \geq -a$, for some $a\in\mathbb{R}_+$. We say that the $i$-th security $S^i$ is undominated on $[0,T]$ if there is no admissible strategy $H$ such that 
\[
S^i_0+(H	\cdot S)_T\geq S^i_T\;\; \QG-a.s.\quad\text{ and }\quad \QG(S^i_0+(H	\cdot S)_T> S^i_T)>0.
\]
A market satisfies no dominance (ND) on $[0,T]$ if each $S^i$, $i\in\{1,\dots,d\}$, is undominated on $[0,T]$.
\end{df}

\noindent
It is natural to transpose this concept to our setting with uncertainty, as we do in the following definition.

\begin{df}
Consider a market model under a set of priors $\QC$. The $i$-th security $S^i$ is \emph{undominated} on $[0,T]$ if there is no admissible strategy $H\in\HC$ such that 
\[
S^i_0+(H	\cdot S)_T\geq S^i_T\;\; \QC-q.s.\text{ and there exists a $\QG\in\QC$ such that } \QG(S^i_0+(H	\cdot S)_T> S^i_T)>0.
\]
A market satisfies \emph{robust no dominance} (RND) on $[0,T]$ if each $S^i$, $i\in\{1,\dots,d\}$, is undominated on $[0,T]$.
\end{df}

\begin{rem}
It is important to notice that, as in the classical case, if $S^i$ is undominated on $[0,T]$, it also undominated on $[0,T^\prime]$, for $T^\prime<T$. Let $H^i$ be given by 
\[
H^i=(0,\dots,0,1,0,\dots,0),
\]
with $1$ in position $i$. The trading strategy $H^i$ is admissible, being $S$ a $\QG$-local martingale for every $\QG\in\QC$. If there would be a dominating strategy $H$ on $[0,T^\prime]$, by applying the strategy $K=H\ind{t\leq T^\prime} + H^i\ind{t>T^\prime}$, we would obtain
\[
S^i_0+(K\cdot S)_T=S_T^i+S^i_0+(H\cdot S)_{T^\prime}-S^i_{T^\prime} \geq S_T^i\quad q.s.,
\]
together with the existence of a $\QG\in\QC$ such that 
\[
\QG(S^i_0+(K\cdot S)_T>S_T^i)>0.
\]
\end{rem}

\noindent
The ND assumption plays a key role in the classical literature on bubbles. We just mention two results by reminding that, if enforced, this concept rules out bubbles in the complete market models described by \cite{protter_complete}; moreover, ND is precisely the ingredient needed to exclude bubbles in the setting of \cite{schweizer_bubbles}, where fundamental values are modeled with superreplication prices. Similar results can be obtained also in the present framework.
\begin{lm}
Suppose that for each $\QG\in\QC$ the $\QG$-market model is complete. If robust no dominance holds, then there exists no robust bubble.
\end{lm}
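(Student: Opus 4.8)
The plan is to argue by contraposition: assuming a robust bubble exists, I would build an admissible strategy that dominates the bubbly asset, contradicting RND. So suppose there are a stopping time $\tau$ and a measure $\bar{\QG}\in\QC$ with $\bar{\QG}(S_\tau>S_\tau^\ast)>0$, and write $S^\ast$ for the robust fundamental value \eqref{rfve} of the bubbling component $S^i$. The first step is to identify $S^\ast$ with the robust (conditional) superreplication price. Because every $\QG$-market is complete, the family $\QC$ is saturated, so the duality gap discussed after Proposition \ref{classicrobust} is excluded and the robust superhedging duality of \cite{nutz_jumps} applies; combined with the tower property \eqref{nutz1} and the representation \eqref{nuzt2} of Theorem \ref{robustsetting}, this yields $S_t^\ast=\EC_t(S_T^i)$ together with a superhedging strategy anchored at $\tau$, namely an admissible $H'\in\HC$ vanishing on $\llbracket 0,\tau\rrbracket$ with
\[
S_\tau^\ast+\int_\tau^T H'\,dS\geq S_T^i\qquad \text{q.s.}
\]

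The second step splices this with simply holding the asset up to $\tau$. Recalling the strategy $H^i=(0,\dots,0,1,0,\dots,0)$ from the preceding remark and setting
\[
H:=H^i\ind{t\leq\tau}+H'\ind{t>\tau},
\]
the gains decompose as $(H\cdot S)_T=(S_\tau^i-S_0^i)+\int_\tau^T H'\,dS$, so that
\[
S_0^i+(H\cdot S)_T=S_\tau^i+\int_\tau^T H'\,dS\geq S_\tau^i+(S_T^i-S_\tau^\ast)=S_T^i+(S_\tau^i-S_\tau^\ast).
\]
Since $S$ is a nonnegative $\QG$-local martingale, hence a $\QG$-supermartingale, for every $\QG\in\QC$, we have $S_t^\ast=\EC_t(S_T^i)\leq S_t^i$ quasi surely, whence $S_\tau^i-S_\tau^\ast\geq0$ q.s.\ and therefore $S_0^i+(H\cdot S)_T\geq S_T^i$ q.s. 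Moreover on the set $\{S_\tau>S_\tau^\ast\}$, which carries positive $\bar{\QG}$-mass, the inequality is strict, so $\bar{\QG}(S_0^i+(H\cdot S)_T>S_T^i)>0$. Thus $H$ dominates $S^i$, contradicting RND, and the lemma follows.

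The main obstacle I anticipate lies entirely in the first step: producing the superhedging strategy $H'$ anchored at the \emph{stopping time} $\tau$ with initial cost exactly $S_\tau^\ast$. This requires first that completeness of each single-prior market genuinely delivers saturation of $\QC$, so that $S^\ast$ coincides with the robust superreplication price rather than merely bounding it from below; and second a conditional, dynamic version of the superhedging theorem of \cite{nutz_jumps} with the infimum \emph{attained} at $\tau$ instead of at time $0$. Attainment is what makes the strict inequality on $\{S_\tau>S_\tau^\ast\}$ available without an $\varepsilon$-approximation, and I expect to propagate the time-$0$ duality to $\tau$ through the consistency relations \eqref{nutz1}--\eqref{nutz3}. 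A secondary technical point is to verify that the spliced process $H$ is indeed in $\HC$, i.e.\ that $H\cdot S$ remains a $\QG$-supermartingale across $\tau$ for every $\QG\in\QC$, which should follow from the q.s.\ continuity of $S$ and a pasting argument.
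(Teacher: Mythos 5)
Your proposal runs on the same engine as the paper's proof: completeness of each $\QG$-market makes $\QC$ saturated (the unique ELMM equivalent to $\QG$ is $\QG$ itself), so the superhedging duality of \cite{nutz_jumps} holds without a gap, and a superreplicating strategy cheaper than the market price is a dominating strategy, contradicting RND. The difference is where you invoke the duality. The paper's proof is purely static: it states the time-$0$ identity \eqref{dual} and concludes in one line that, in the presence of a bubble, the superreplicating strategy dominates $S$. You instead anchor the superhedge at the stopping time $\tau$ at which the bubble is seen and splice it with holding the asset on $\llbracket 0,\tau\rrbracket$. Your algebra for the spliced strategy is correct (q.s.\ superreplication of $S_T$, strict on $\{S_\tau>S_\tau^\ast\}$, which carries positive $\bar{\QG}$-mass), and the admissibility of the spliced $H$ is indeed unproblematic, since $S_{\cdot\wedge\tau}-S_0$ and $H'\cdot S$ are $\QG$-supermartingales for every $\QG\in\QC$.

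The trade-off is real, and you identified it precisely. The paper's static argument needs nothing beyond \eqref{dual} as stated (attainment at time $0$ is part of the superhedging theorem), but it only produces strict domination when $S_0>S_0^\ast$, i.e.\ when the bubble is already present at time $0$: if $S_0=S_0^\ast$ and the bubble is born at a later stopping time --- a possibility the paper itself stresses as a novelty of the robust framework --- the time-$0$ superhedge has cost $S_0$ and may simply replicate $S_T$, so no contradiction with RND follows from it. Your spliced strategy is exactly what covers that case, but it rests on a conditional superhedging duality with the infimum \emph{attained} at the stopping time $\tau$; this is not contained in \cite{nutz_jumps}, which is a time-$0$ statement, nor established in the paper, so as it stands this step is a genuine gap in your argument. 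You are also right that an $\varepsilon$-approximation cannot substitute for attainment: superhedging from $S_\tau^\ast+\varepsilon$ loses the q.s.\ inequality off $\{S_\tau-S_\tau^\ast\geq\varepsilon\}$, unless one additionally localizes the switch to that event, which again requires a conditional duality plus a measurable selection. In short: your route is the more careful one and is what a complete proof of the lemma (covering later-born bubbles) actually needs, but the anchoring step must be supplied; the paper's own proof avoids it only by implicitly reading the bubble as present at time $0$, in which case your construction collapses to the paper's with $\tau=0$.
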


\begin{proof}
Observe that, if each $\QG$-market is complete, the results of \cite{nutz_jumps} guarantee the duality
\beq
\label{dual}
\sup_{\QG\in\QC} E_\QG[S_T]= \inf\{x\in\RG\;:\;\exists \;H\in\mathcal{H} \text{ with } x+(H\cdot S)_T\geq S_T\; \QG-a.s. \text{ for all } \QG\in\QC\}.
\eeq
In presence of a bubble, the superreplicating strategy would then dominate $S$, in contradiction with RND.
\end{proof}

\noindent
Hence, in the general case, under RND any bubble would be the result of a duality gap in \eqref{dual}, which is the case considered in \cite{cox_robust}. 

We remark how in general RND does not imply NFLVR for every $\QG$-market, $\QG\in\QC$. It is in fact well known that ND is stronger than NFLVR in the single prior setting, but it is a priori not necessary that RND implies ND for every $\QG$-market.

\section{Infinite Time Horizon}
We study here the case of infinite time horizon. Let $\tau>0$ q.s.\ be a stopping time describing the maturity of the risky asset. To reflect the impossibility of the investor to consume the final payoff of $S$ in the case $\{\tau=\infty\}$ we generalize here the robust fundamental value established in \eqref{rfve} by setting
\beq\label{rfv2e}
S^{\ast}_t=\left(\esssup_{\mathbb{Q}^\prime\in \mathcal{Q}(t,\mathbb{Q})} E_{\mathbb{Q}^\prime}[S_\tau\ind{\tau<\infty}|\mathcal{F}_t]\right)\ind{t<\tau},\quad \QG-a.s.
\eeq
for every $t\geq 0$ and $\QG\in\QC$. The fundamental value \eqref{rfv2e} embodies the finite time horizon case \eqref{rfve} and we claim that it is well defined, as shown in the following proposition.

\begin{prop}\label{infinity}
The fundamental value \eqref{rfv2e} is well defined. In addition, $S_{t\wedge\tau}$ converges to $S_\tau$ q.s.\  for $t\rightarrow\infty$.
\end{prop}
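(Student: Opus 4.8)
The plan is to reduce the proposition to the supermartingale convergence theorem applied under each prior separately, together with a conditional Fatou estimate, and then to lift the resulting measure-by-measure statements to quasi-sure ones via polarity of the exceptional sets. The starting observation is that $S$ is a positive $\QG$-local martingale for every $\QG\in\QC$, hence a positive $\QG$-supermartingale.

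For the convergence claim I would fix $\QG\in\QC$ and note that the stopped process $(S_{t\wedge\tau})_{t\geq0}$ is a nonnegative $\QG$-supermartingale, bounded in $L^1(\QG)$ since $E_\QG[S_{t\wedge\tau}]\leq E_\QG[S_0]$. By the supermartingale convergence theorem it converges $\QG$-a.s.\ as $t\to\infty$ to a finite limit; on $\{\tau<\infty\}$ this limit equals $S_\tau$ by right-continuity of $S$, while on $\{\tau=\infty\}$ it serves as the definition of $S_\tau:=\lim_t S_t$. Since the set on which convergence fails is $\QG$-null for every $\QG\in\QC$, it is polar, and therefore $S_{t\wedge\tau}\to S_\tau$ q.s.

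For well-definedness of \eqref{rfv2e}, the random variable $S_\tau\ind{\tau<\infty}$ is nonnegative and measurable, being a q.s.\ limit of the measurable random variables $S_{s\wedge\tau}\ind{\tau<\infty}$. The key point is finiteness of the conditional expectations: for every $\QG^\prime\in\QC(t,\QG)$, conditional Fatou (all terms are nonnegative and $S_\tau\ind{\tau<\infty}\leq\liminf_s S_{s\wedge\tau}$) and the supermartingale property give
\[
E_{\QG^\prime}[S_\tau\ind{\tau<\infty}|\F_t]\leq \liminf_{s\to\infty}E_{\QG^\prime}[S_{s\wedge\tau}|\F_t]\leq S_{t\wedge\tau}\quad \QG^\prime\text{-a.s.},
\]
where $S_{t\wedge\tau}$ is $\F_t$-measurable (as $t\wedge\tau\leq t$) and $\QG$-a.s.\ finite. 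Since each such conditional expectation is $\F_t$-measurable and $\QG^\prime=\QG$ on $\F_t$, it is also $\QG$-a.s.\ defined, and the whole family $\{E_{\QG^\prime}[S_\tau\ind{\tau<\infty}|\F_t]:\QG^\prime\in\QC(t,\QG)\}$ is dominated by the $\QG$-a.s.\ finite random variable $S_{t\wedge\tau}$. Hence its essential supremum exists and is $\QG$-a.s.\ finite, and multiplication by $\ind{t<\tau}\in\F_t$ is harmless. As the expression depends on $\QG$ only through the class $\QC(t,\QG)$, it patches consistently across priors exactly as the finite-horizon value \eqref{rfve}; restricting to $\tau\leq T$ recovers \eqref{rfve}, so \eqref{rfv2e} indeed embodies the finite-horizon case.

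The main obstacle is the integrability in the infinite-horizon regime: because $S$ need not be uniformly integrable and the priors in $\QC$ may be mutually singular, one cannot take finiteness of the conditional expectations (and of their essential supremum) for granted. The uniform bound by the stopped value $S_{t\wedge\tau}$ coming from conditional Fatou is precisely what controls this, and the passage from the $\QG$-a.s.\ statements to genuine quasi-sure ones, via polarity of the exceptional sets, is where the model-uncertainty structure is used.
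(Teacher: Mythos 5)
Your proof is correct and follows essentially the same route as the paper's: the classical supermartingale convergence theorem applied prior by prior, polarity of the exceptional null sets to upgrade to quasi-sure convergence, and Fatou's lemma to control integrability. The only cosmetic difference is that the paper applies unconditional Fatou to bound $\EC_0(S_\tau)$ by $\sup_{\QG\in\QC}E_\QG[W_0]<\infty$, whereas you use conditional Fatou to dominate each $E_{\QG^\prime}[S_\tau\ind{\tau<\infty}|\F_t]$ by $S_{t\wedge\tau}$; both yield the required finiteness by the same mechanism.
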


\begin{proof}
Fixed $\QG\in\QC$, we know that $W$ is a $\QG$-supermartingale, which then converges $\QG$-a.s.\ to $S_\tau$ for $t\to\infty$, because of the classical supermartingale convergence theorem (see \cite{dellacherie}, V.28 and VI.6). Therefore $W_t=S_{t\wedge \tau}\rightarrow S_\tau$ q.s., thanks to the same argument used in Example \ref{ex1}, and $S_\tau$ is Borel measurable. Hence $S_\tau\ind{\tau<\infty}$ is a Borel measurable random variable and we can compute its sublinear conditional expectation. Moreover, as $W$ is a robust supermartingale, by Fatou's Lemma we obtain
\[
\begin{split}
\EC_0(S_{\tau})&=\EC_0\left(\liminf_{t\to\infty} S_{t\wedge \tau}\right)=\sup_{\QG\in\QC} E_\QG\left(\liminf_{t\to\infty}S_{t\wedge \tau}\right)\leq  \sup_{\QG\in\QC} \liminf_{t\to\infty}E_\QG\left(S_{t\wedge \tau}\right)\\
&=\sup_{\QG\in\QC} \liminf_{t\to\infty}E_\QG\left(W_{t}\right)\leq \sup_{\QG\in\QC} E_\QG\left(W_{0}\right)<\infty,
\end{split}
\]
which guarantees $\EC_0(S_\tau\ind{\tau<\infty})<\infty$. 
\end{proof}

\noindent 
We introduce also the notion of robust fundamental wealth, by defining the process $W^\ast=(W^\ast_t)_{t\geq 0}$, where 
\begin{align}
\notag W_t^\ast:&=S_t^\ast+S_\tau\ind{\tau\leq t}=\left(\esssup_{\mathbb{Q}^\prime\in \mathcal{Q}(t,\mathbb{Q})} E_{\mathbb{Q}^\prime}[S_\tau\ind{\tau<\infty}|\mathcal{F}_t]\right)\ind{t<\tau}+S_\tau\ind{\tau\leq t}\\
\label{infinite} &=\esssup_{\mathbb{Q}^\prime\in \mathcal{Q}(t,\mathbb{Q})} E_{\mathbb{Q}^\prime}[S_\tau\ind{\tau<\infty}|\mathcal{F}_t],\qquad \QG-a.s.
\end{align}
for all $\QG\in\QC$. A robust bubble is defined as in the finite time horizon case, i.e.\
\[
\beta_t= S_t-S^\ast_t=W_t-W^\ast_t,
\]
for every $t\geq 0$. As a consequence, the case $\tau=\infty$ q.s.\ implies the presence of a robust bubble. As argued in \cite{protter_complete}, the bubble appearing in this situation is analogous to fiat money, a terminal value obtained at $\infty$. We report here Example 2 from \cite{protter_complete} to clarify this point.

\begin{ex}\label{fiat}
Let $S_t=1$ for all $t\in\mathbb{R}_+$ be fiat money. Since money never matures, we have $\tau=\infty$, $S_\tau=1$ and $S_t^\ast=0$ q.s.\ for all $t\geq 0$. As
\[
\beta_t=S_t-S_t^\ast=1 \quad q.s.
\]
this means that the entire value of the asset comes from the bubble.
\end{ex}

\noindent
We summarize these results in the following proposition.

\begin{prop}
It holds:
\begin{itemize}
\item[(i)] In the case there exists a $\bar{\QG}\in\QC$ and $t\geq 0$ such that $\bar{\QG}^\prime(\tau=\infty)=1$ for all $\bar{\QG}^\prime\in\QC(t,\bar{\QG})$, there exists a robust bubble.
\item[(ii)] The bubble $\beta$ is a $\QG$-local submartingale for every $\QG\in\QC$.
\item[(iii)] The wealth process $W$ can be a $\QC$-symmetric martingale also in the presence of a bubble. 
\end{itemize}
\end{prop}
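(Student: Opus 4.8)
The plan is to prove the three items separately, each resting on a different tool already established in the excerpt.

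For (i), I would first note that $\bar{\QG}\in\QC(t,\bar{\QG})$, so the hypothesis forces $\bar{\QG}(\tau=\infty)=1$ and hence $\ind{t<\tau}=1$ $\bar{\QG}$-a.s. Moreover, for every $\bar{\QG}^\prime\in\QC(t,\bar{\QG})$ we have $\ind{\tau<\infty}=0$ $\bar{\QG}^\prime$-a.s., so each conditional expectation $E_{\bar{\QG}^\prime}[S_\tau\ind{\tau<\infty}|\FC_t]$ is an $\FC_t$-measurable function vanishing $\bar{\QG}^\prime$-a.s.; since all these measures coincide with $\bar{\QG}$ on $\FC_t$, it vanishes $\bar{\QG}$-a.s. as well. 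The essential supremum in \eqref{rfv2e} is then $0$ $\bar{\QG}$-a.s., so that $S^\ast_t=0$ and $\beta_t=S_t-S^\ast_t=S_t$. As $S$ is a nontrivial positive process, $\bar{\QG}(\beta_t>0)>0$, which exhibits a robust bubble.

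For (ii), I would write $\beta=W-W^\ast$ and argue that, for each fixed $\QG\in\QC$, the first term is a $\QG$-local martingale and the second a $\QG$-supermartingale. The local martingale property of $W$ is Assumption \ref{NFLVR}. For $W^\ast$, set $\xi:=S_\tau\ind{\tau<\infty}$; this is Borel measurable with $\EC_0(\xi)<\infty$ by Proposition \ref{infinity}, and by \eqref{nuzt2} one has $W^\ast_t=\EC_t(\xi)$ $\QG$-a.s. Applying the tower property \eqref{nutz3} with deterministic times $s\leq t$ and using $\QG\in\QC(s,\QG)$ gives $W^\ast_s=\esssup_{\QG^\prime\in\QC(s,\QG)}E_{\QG^\prime}[W^\ast_t|\FC_s]\geq E_\QG[W^\ast_t|\FC_s]$ $\QG$-a.s., which is exactly the supermartingale inequality; integrability follows from $E_\QG[W^\ast_t]\leq \EC_0(\xi)<\infty$. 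Consequently $-W^\ast$ is a $\QG$-submartingale, and stopping $\beta$ along a localizing sequence $(\rho_n)$ for $W$ makes $\beta^{\rho_n}=W^{\rho_n}-(W^\ast)^{\rho_n}$ the sum of a $\QG$-martingale and a $\QG$-submartingale (a stopped supermartingale is a supermartingale), hence a submartingale; letting $n\to\infty$ shows $\beta$ is a $\QG$-local submartingale.

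For (iii), I would exhibit the fiat money situation of Example \ref{fiat}, where $S_t\equiv 1$, $\tau=\infty$ and $S^\ast_t\equiv 0$, so that $\beta_t=1>0$ and a robust bubble is present. Here $W_t=S_t\ind{\tau>t}+S_\tau\ind{\tau\leq t}=1$ for every $t$, and since $\EC_t(1)=1=W_t$ and $\EC_t(-1)=-1=-W_t$, both $W$ and $-W$ are $\QC$-martingales; thus $W$ is a $\QC$-symmetric martingale coexisting with a bubble. I expect the main obstacle to lie in (ii): one must verify that $W^\ast$ is a \emph{genuine} $\QG$-supermartingale rather than merely a local one, which hinges on the finiteness $\EC_0(\xi)<\infty$ from Proposition \ref{infinity} together with the time consistency \eqref{nutz3}, and one must justify the passage from ``local martingale plus submartingale'' to ``local submartingale'' carefully in the unbounded horizon, controlling integrability along the localizing sequence.
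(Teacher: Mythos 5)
Your proof is correct and follows essentially the same route as the paper's: (i) via showing $W^\ast_t=0$ $\bar{\QG}$-a.s.\ from \eqref{infinite}, (ii) via the decomposition $\beta=W-W^\ast$ with $W$ a $\QG$-local martingale (Assumption \ref{NFLVR}) and $W^\ast$ a $\QG$-supermartingale, and (iii) via the fiat-money Example \ref{fiat}. You merely spell out details the paper leaves implicit, such as the localization argument in (ii) and the nontriviality of $S$ needed in (i).
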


\begin{proof}
The proof of (i) follows from \eqref{infinite}, noticing that 
\[
W_t^\ast=0 \qquad \bar{\QG}-a.s.,
\]
as by hypothesis
\[
S_\tau\ind{\tau<\infty}=0 \qquad \bar{\QG}^\prime-a.s. 
\]
for all $\bar{\QG}^\prime\in\QC(t,\bar{\QG})$. The local submartingale property follows from the definition and from Assumption \ref{NFLVR}. The wealth process can be a $\QC$-symmetric martingale as it can be seen in Example \ref{fiat}.
\end{proof}


\begin{thebibliography}{99}

\bibitem{sorin}
F. Biagini, H. F\"ollmer, and S. Nedelcu. Shifting martingale measures and the slow birth of a bubble as a submartingale. {\sl Finance and Stochastics}, 18(2):297--326, 2014.

\bibitem{nutz_robust}
S. Biagini, B. Bouchard, C. Kardaras, and M. Nutz. Robust fundamental theorem for continuous processes. {\sl To appear in Mathematical Finance}, 2016.

\bibitem{cohen}
S. Cohen. Quasi-sure analysis, aggregation and dual representations of sublinear expectations in general spaces. {\sl Electronic Journal of Probability}, 17(62):1--15, 2012.

\bibitem{hobson_bubbles} 
A. M. G. Cox and D. G. Hobson. Local martingales, bubbles and option prices. {\sl Finance and Stochastics}, 9(4):477--492, 2005.

\bibitem{cox_robust}
A. M. G. Cox, Z. Hou, and J. Obloj. Robust pricing and hedging under trading restrictions and the emergence of local martingale models. \url{http://arxiv.org/abs/1406.0551}, 2015.

\bibitem{delbaen}
F. Delbaen. {\sl The structure of m--stable sets and in particular of the set of risk neutral measures}, volume 1874 of {\sl Lecture Notes in Math.}, pages 215--258. Springer Berlin Heidelberg, 2006.

\bibitem{dellacherie}
C. Dellacherie and P. Meyer. {\sl Probabilities and potential B.} North-Holland Publishing Co., Amsterdam, 1982.

\bibitem{yor_strictlocal}
K. D. Elworthy, X.-M. Li, and M. Yor. The importance of strictly local martingales; applications to radial Ornstein--Uhlenbeck processes. 115(3):325--355, 1999.

\bibitem{protter_shrink}
H. F\"ollmer and P. Protter. Local martingales and filtration shrinkage. {\sl ESAIM: Probability and Statistics}, 15:S25--S38, 2011.

\bibitem{follmer_book}
H. F\"ollmer and A. Schied. {\sl Stochastic Finance: An introduction in discrete time}. De Gruyter Graduate, 3rd edition, 2010.

\bibitem{marketefficiency}
R. A. Jarrow and M. Larsson. The meaning of market efficiency. {\sl Mathematical Finance}, 22(1):1--30, 2012.

\bibitem{protter_complete}
R. A. Jarrow, P. Protter, and K. Shimbo. {\sl Asset price bubbles in complete markets}, pages 97--121. Advances in Mathematical Finance. Birkh{\"a}user Boston, 2007.

\bibitem{protter_incomplete}
R. A. Jarrow, P. Protter, and K. Shimbo. Asset price bubbles in incomplete markets. {\sl Mathematical Finance}, 20(2):145--185, 2010.

\bibitem{kramov}
D. O. Kramkov.
\newblock Optional decomposition of supermartingales and hedging contingent claims in incomplete security markets. {\sl Probability Theory and Related Fields}, 105(4):459--479, 1996.

\bibitem{merton}
R. C. Merton. Theory of rational option pricing. {\sl Bell Journal of Economics and Management Science},
  4(1):141--183, 1973.

\bibitem{nutz_random}
M. Nutz. Random G-expectations. {\sl The Annals of Applied Probability}, 23(5):1755--1777, 2013.

\bibitem{nutz_jumps}
M. Nutz. Robust superhedging with jumps and diffusion {\sl Stochastic Processes and their Applications}, 125(12):4543--4555, 12 2015.

\bibitem{nutz_superhedging}
M. Nutz and H. M. Soner. Superhedging and dynamic risk measures under volatility uncertainty. {\sl SIAM Journal on Control and Optimization}, 50(4):2065--2089, 2012.

\bibitem{nutz_constructing}
M. Nutz and R. Van Handel. Constructing sublinear expectations on path space. {\sl Stochastic Processes and their Applications}, 123(8):3100--3121,
  2013.

\bibitem{nutz_optimalstopping}
M. Nutz and J. Zhang. Optimal stopping under adverse nonlinear expectation and related games. pages 2503--2534, 2015.

\bibitem{protter_htransform}
S. Pal and P. Protter. Analysis of continuous strict local martingales via h-transforms. {\sl Stochastic Processes and their Applications}, 120(8):1424--1443, 2010.

\bibitem{Peng:Gexp}
S. Peng. G-expectation, G-Brownian motion and related stochastic calculus of It\^o type. {\sl Stochastic Analysis and Applications}, 2:541--567, 2007.

\bibitem{peng_nonlinear}
S. Peng. Nonlinear expectations and stochastic calculus under uncertainty. \url{http://arxiv.org/abs/1002.4546}, 2010.

\bibitem{protter_fundamental}
P. Protter. A mathematical theory of financial bubbles. In V. Henderson and R. Sircar, editors, {\sl Paris-Princeton Lectures on Mathematical Finance}, volume 2081 of {\sl Lecture Notes in Mathematics}, pages 1--108. Springer International Publishing, 2013.

\bibitem{schweizer_bubbles}
M. Schweizer and M. Herdegen. Strong bubbles and strict local martingales. {\sl To appear in International Journal of Theoretical and Applied Finance}, 2016.

\bibitem{soner_representation}
H. M. Soner, N. Touzi, and J. Zhang. Martingale representation theorem for the G-expectation. {\sl Stochastic Processes and their Applications}, 121(2):265--287, 2011.

\bibitem{soner_aggregation}
H. M. Soner, N. Touzi, and J. Zhang. Quasi-sure stochastic analysis through aggregation. {\sl Electronic Journal of Probability}, 16:1844--1879, 2011.

\bibitem{stricker}
C. Stricker. Quasimartingales, martingales locales, semimartingales et filtration naturelle. {\sl Z. Wahrscheinlichkeitstheorie verw. Gebiete}, 39(1):55--63, 1977.

\bibitem{Vorbrink}
J. Vorbrink. Financial markets with volatility uncertainty. {\sl Journal of Mathematical Economics}, 53:64--78, 2014.

\end{thebibliography}
\end{document}